%% file: arxiv.tex
\documentclass[11pt]{article}

\usepackage[T1]{fontenc}
\usepackage[utf8]{inputenc}
\usepackage{amsmath,amssymb,amsfonts,amsthm}
\usepackage{bm}
\usepackage{mathtools}
\usepackage{mathpartir}
\usepackage{stmaryrd}
\usepackage{array}
\usepackage{tabularx}
\usepackage{booktabs}
\usepackage{microtype}
\usepackage{ifthen}
\usepackage{xcolor}
\usepackage{tikz}
\usetikzlibrary{arrows.meta,positioning,fit,calc,decorations.pathreplacing,shapes.geometric,shapes,arrows}
\usepackage{algorithm}
\usepackage{algpseudocode}
\usepackage{float}

\usepackage{arxiv}

\usepackage[colorlinks=true,allcolors=blue]{hyperref}
\usepackage{url}

\definecolor{backcolour}{rgb}{0.95,0.95,0.92}

\newtheorem{theorem}{Theorem}[section]
\newtheorem{lemma}[theorem]{Lemma}
\newtheorem{proposition}[theorem]{Proposition}

\theoremstyle{definition}
\newtheorem{definition}{Definition}[section]

\newcommand{\pair}[2]{\langle #1, #2 \rangle}
\newcommand{\encs}[2]{\mathsf{enc}_s(#1,#2)}
\newcommand{\encpk}[2]{\mathsf{enc}_{pk}(#1,#2)}
\newcommand{\hash}[1]{\mathsf{hash}(#1)}
\newcommand{\pk}[1]{\mathsf{pk}(#1)}

\newcommand{\know}{\mathsf{DY}}
\newcommand{\Atoms}{\mathcal{A}}
\newcommand{\Term}{\mathcal{T}}
\newcommand{\synth}{\mathsf{synth}}
\newcommand{\analz}{\mathsf{analz}}
\newcommand{\vdashDY}{\mathrel{\vdash_{\mathsf{DY}}}}
\newcommand{\Roles}{\mathsf{Roles}}
\newcommand{\Ag}{\mathsf{Ag}}

\newcommand{\Sys}{S}
\newcommand{\Init}{\mathsf{Init}}

\newcommand{\Secret}{\mathsf{Secret}}

\newcommand{\hatmap}[1]{\widehat{#1}}
\newcommand{\act}{\alpha}

\newcommand{\todo}[1]{}
\newcommand{\fixme}[1]{}
\newcommand{\expl}[1]{}
\newcommand{\ioana}[1]{}
\newcommand{\fortunat}[1]{}
\newcommand{\srinibas}[1]{}
\newcommand{\jam}[1]{}

\input{example-macros}

\newboolean{long}
\setboolean{long}{true}

\title{Decidability of  Parameterised Dolev-Yao Secrecy}

\author{%
\begin{tabular}{c@{\hspace{2.5em}}c@{\hspace{2.5em}}c}
  Ioana Boureanu & R.\ Ramanujam & Srinibas Swain \\[0.3em]
  {\small Surrey Centre for} & {\small IMSc Chennai} & {\small Adelaide University} \\
  {\small Cyber Security} & {\small Azim Premji University} & \\
  {\small University of Surrey} & & \\
\end{tabular}%
}
\date{}

\reporttitle{From Classical to Parameterised Secrecy}
\reportauthor{Boureanu, Ramanujam \& Swain}

\begin{document}

\maketitle

\begin{abstract}
We study the verification of \textbf{parameterised secrecy} for cryptographic protocols in the Dolev--Yao model, where the number of protocol sessions is unbounded and treated as a parameter. This differs fundamentally from classical Dolev--Yao secrecy, which asks whether a  protocol leaks a secret irrespective of the number of executions; our question is whether secrecy holds uniformly across \emph{all system sizes}, where such a  size is a parameter. This parameterised perspective captures how attacks scale with the number of participants and provides a formal basis for the empirical effectiveness of small-instance analysis.

Secrecy (parameterised or not) is undecidable in general, even under bounded freshness or bounded message size. We identify two structural restrictions that make parameterised secrecy decidable: (i) global bounded freshness per role, and (ii) a Dolev--Yao intruder restricted to well-typed substitutions. Under these assumptions, protocol executions admit a finite representation up to a collapsing map on agents and terms.

Our main result is that parameterised secrecy is decidable in this setting. We obtain a cut-off theorem: secrecy violations in systems with arbitrarily many sessions are always witnessed in systems of bounded size. The cut-off is self-contained; more strongly, the induced transition system forms a well-structured transition system (WSTS) under a bound-based ordering, so secrecy also reduces to a coverability problem in WSTS. This provides a structural explanation for the existence of finite witnesses in symbolic protocol analysis and connects Dolev--Yao verification with parameterised verification techniques.
\end{abstract}

\section{Introduction}
\input{intro2}

\section{Related Work} \label{sec:rw}
\input{relw}

\input{struct}

\section{Preliminaries} \label{prelim}
\subsection{Terms, Signatures, and Dolev--Yao Deduction}
\label{subsec:terms-dy}
\input{dy-stuff}

\subsection{Dolev--Yao Deduction System \& Normal-form DY Proofs}
\label{app:dy}
\input{dy-app}

\subsection{Well-Structured Transition Systems}
\label{subsec:wsts}
\input{wsts-b}

\section{Protocol Model \& Parameterised Secrecy}
\label{sec:protocol-model}

\input{prot-model}

\section{Two  Examples}
\label{sec:examples}
\input{example-freshness}
\input{example-param}

\section{Generic Maps for Agents \& Orders on States} \label{sec:bounds}
\input{boundedness}

\section{Specific Maps \& Orders for Parameterised Secrecy} \label{sec:wqo}
\input{wqo}

\section{Parameterised Secrecy with Bounded Freshness is Decidable} \label{sec:decid}

\input{wsts_result2}

\section{Actually Deciding Parameterised Secrecy} \label{sec:decision}
\input{decid-alg2}

\section{Scope and Limitations} \label{sec:scope}
\input{scope-limitations}

\section{Conclusions} \label{sec:concl}
\input{concl}

\bibliographystyle{unsrt}
\bibliography{biblio1}

\end{document}

%% file: example-macros.tex
\providecommand{\Pex}{\mathcal{P}_{\mathsf{fr}}}
\providecommand{\Pksd}{\mathcal{P}_{\mathsf{ksd}}}

%% file: intro2.tex
\paragraph{\textbf{Context and motivation: Sources of undecidability}}
The formal verification of secrecy properties of cryptographic protocols in the Dolev--Yao (DY) model~\cite{DY83} has been a central concern of the community for several decades. In this symbolic setting, cryptographic primitives are assumed perfect, and the adversary is granted full control over the network: messages can be intercepted, replayed, composed, and decomposed according to algebraic rules. Within this abstraction, secrecy reduces to a reachability question: whether designated data  ever becomes derivable by the intruder.

At a technical level, the undecidability of secrecy stems from two orthogonal sources of unboundedness: unbounded nonces and unbounded message length~\cite{Dur99}.
In the former case, even when symbolic terms remain of bounded size, an adversary can exploit an unbounded supply of fresh values to encode arbitrarily complex information. In the latter case, even with a bounded set of atomic values, the adversary can force honest agents to construct increasingly long messages, thereby simulating unbounded computation.
These phenomena are not merely artefacts of pathological encodings: they are intrinsic to the expressive power of the DY model, as already observed in early undecidability results~\cite{Dur99}.
In particular, bounding only the number of sessions or only the freshness of data does not suffice to recover decidability in general.

\vspace*{0.0cm}
\paragraph{\textbf{Classical routes to decidability}}
A natural approach is therefore to impose suitable bounds. Durgin~\cite{Dur99} showed that secrecy becomes decidable, even under unbounded sessions,  when both the number of nonces and the size of terms are externally bounded. Other approaches impose syntactic discipline instead of explicit bounds. In particular, Ramanujam and Suresh~\cite{RamSu03} established decidability for tagged protocols with unbounded nonces under atomic (well-typed) substitutions, and further showed~\cite{RamSu03b} that tagging alone suffices for decidability without bounds on nonces or freshness.

\vspace*{0.0cm}
\paragraph{\textbf{Attacker model and well-typed substitutions}}
A crucial aspect underlying several of these results, and central to our work, is the role of \emph{well-typed substitutions}, e.g., see \cite{RamSu03b}. In the classical Dolev--Yao setting, the attacker is unrestricted: arbitrary substitutions may be used, enabling so-called type-flaw attacks and contributing directly to undecidability. In contrast, throughout this paper we consider a \emph{restricted Dolev--Yao attacker} whose actions are confined to well-typed substitutions. That is, variables of a given type (e.g., nonces, keys) may only be instantiated by terms of the same type.

This restriction does \emph{not} trivialise the model: the attacker retains full symbolic power in terms of message construction and derivability, but is prevented from exploiting ill-typed encodings. As we will show, this restriction is not merely technical but structural: it is  what enables the collapsing arguments in the parameterised setting and underpins the decidability result.

\vspace*{-0.1cm}
\paragraph{\textbf{From bounds to structure}}
These results suggest a deeper principle: decidability is not solely a matter of bounding resources, but of imposing \emph{structure} on executions. Tagging, typing, and related mechanisms constrain how information flows through runs, effectively reducing the space of behaviours that must be considered.

This raises a \textbf{natural question}: \emph{are there other structuring mechanisms, beyond tagging, that yield decidability?} More specifically, \emph{since decidability arguments ultimately rely on reducing analysis to a finite set of representative runs, can this reduction be achieved via a principled quotienting of executions?}

\vspace*{0.0cm}
\paragraph{\textbf{Our parameterisation}}
In this paper, we pursue this question through the lens of parameterised systems~\cite{PVSurvey2016}.
Parameterisation introduces a form of symmetry and uniformity that can be exploited to collapse large systems into smaller representatives. Here, \emph{we think of protocol executions as being parameterised on roles, or rather agents instantiating one  role or another; this is arguably natural to Dolev-Yao settings. We consider families of systems of arbitrary size (arbitrarily fixed number of agents per role in each family), and study secrecy uniformly across these families}.

Concretely, under bounded freshness and well-typed substitutions, we show that runs in arbitrarily large systems can be quotiented into runs over a bounded number of agents per role, while preserving the intruder's capabilities (modulo the well-typed restriction discussed above). In particular, the restriction to well-typed substitutions ensures that the attacker's message-generation power stabilises: beyond a certain point, newly generated messages do not yield fundamentally new derivations but can be collapsed onto previously generated ones.

This perspective provides a new form of structural restriction: instead of bounding individual resources  term-size directly, we bound the \emph{essential diversity} of runs via parameterised symmetry. The resulting quotienting process yields a finite basis of representative executions, thereby enabling decidability.

\vspace*{0.0cm}
\paragraph{\textbf{Our shift in viewpoint}}
This leads to a conceptual shift. Classical DY-secrecy verification asks: \emph{is secrecy decidable under certain assumptions?} In contrast, the parameterised perspective asks: \emph{how large does a system need to be to witness a secrecy violation?}

This provides a direct explanation for the empirical success of small-model~\cite{LoweF,Cortier} exploration in protocol analysis.

\input{fig-landscape}

\input{bounded-freshness-meaning}

\paragraph{\textbf{Our model: structural restrictions enabling decidability}}
We make the following modelling choices, which are central to our results:

\begin{itemize}
\item \textbf{Global bounded freshness per role.} For each role $X$, a fixed bound $k_X$ limits the \emph{total} number of fresh nonces that may originate across \emph{all} its instances, across \emph{all} runs, rather than per session.

\item \textbf{Canonical reuse via collapse.} When this global budget is exceeded, nonce generation is forced to reuse existing atoms in a controlled, role-consistent manner. This induces a canonical collapsing of agents and terms.

\item \textbf{Well-typed adversary.} The Dolev–Yao intruder is restricted to well-typed substitutions, preventing type-confusion encodings while preserving symbolic derivation power.

\item \textbf{Terminal runs.} Due to the bounds and the well-typedness, every run reaches a terminal point after finitely many steps, beyond which no new atoms are introduced and the intruder's knowledge evolves only by recombination of existing terms.
\end{itemize}

These restrictions, together,  ensure that the adversary's ability to generate new information stabilises, enabling the collapse of large systems into smaller representatives.

\vspace*{0.0cm}
\paragraph{\textbf{Positioning with respect to our decidability results}}
Our approach complements existing decidability results.  Work by Cortier,
Delaune, and collaborators~\cite{CDS21,Del2} establishes small session bounds via
a \emph{semantic} notion of type compliance that restricts the protocol class but
not the attacker. In contrast, our well-typedness assumption is
\emph{operational}: it constrains the attacker (by restricting substitutions)
while preserving a large, practically relevant class of protocols. This is what
enables the structural collapse in parameterised systems. Relatedly, our setting
differs from standard bounded-freshness models~\cite{RamSu03b, TipleaBEB08} in
imposing a \emph{global} per-role freshness budget (rather than per session) and a
well-typed adversary; both are essential to the collapse.

\paragraph{\textbf{Summary of comparisons}}
The conceptual distinction between classical DY secrecy and our parameterised-secrecy approaches here is summarised in Table~\ref{tab:intro}; this  reiterates the explicit role of  restrictions and the shift from semantic to structural reasoning.

\input{table}

\paragraph{\textbf{Contribution}}
In summary, we show that parameterisation provides a principled structuring mechanism for Dolev--Yao protocol analysis. By combining bounded freshness with a well-typed attacker, we obtain a collapsing property on runs, a cut-off theorem for (parameterised) secrecy, and a decidability result that holds uniformly across all system sizes.

%% file: fig-landscape.tex

\begin{figure}[t]
\centering
\small
\renewcommand{\arraystretch}{1.35}
\begin{tabularx}{\linewidth}{@{}l|>{\centering\arraybackslash}X|>{\centering\arraybackslash}X@{}|}
\multicolumn{1}{c}{} &
\multicolumn{1}{c}{\textbf{Sessions bounded}} &
\multicolumn{1}{c}{\textbf{Sessions unbounded}} \\[2pt]
\cline{2-3}
\textbf{Freshness unbounded} &
Decidable {\footnotesize (bounded-session analysis)} &
\textbf{Undecidable} in general {\footnotesize (classical negative results; message size unbounded)} \\
\cline{2-3}
\textbf{Freshness bounded} &
Decidable {\footnotesize (few atoms, few sessions)} &
\textbf{This work:} decidable via parameterised collapse $+$ WSTS, \emph{under well-typed substitutions} \\
\cline{2-3}
\end{tabularx}
\caption{Two independent pathways: the number of sessions/role-instances and the
freshness budget. Classical decidability occupies the ``sessions bounded''
column. The cell this work opens is the lower-right one, ``sessions unbounded,
freshness bounded''. Note that bounded freshness alone does not tame the
unbounded-message-size source of undecidability, so our result additionally
relies on the well-typed substitution discipline (Section~\ref{prelim}).}
\label{fig:landscape}
\end{figure}

%% file: bounded-freshness-meaning.tex

\paragraph{\textbf{What bounding freshness means}}
Because the phrase invites confusion with the familiar bounded/unbounded-session
dichotomy, we state what the assumption is. It is a bound  
\emph{separate}  from the number of sessions (see Figure~\ref{fig:landscape}),
and it can be read in three registers, as per the below:.

\emph{1. In terms of run generation.} Across all sessions and all agents of a role
$X$, at most $k_X$ distinct nonce-atoms are ever minted. The
$(k_X{+}1)$-th and every later session of role $X$ must reuse one of the $k_X$
values rather than mint a new one. The number of sessions stays unbounded; it is
the supply of fresh atoms that is capped.

\emph{2. In terms of attacker capability.} The Dolev--Yao attacker is not weakened
in its composition or derivation power: it still intercepts, replays, pairs and
encrypts without limit, over unboundedly many sessions. What saturates is the
stock of \emph{new atoms} it can chase. Beyond the terminal point of a run no
fresh atom appears, so the attacker's knowledge can only grow by recombining
terms it already has. Bounded freshness is thus a statement about atoms, not a
restriction on the attacker's moves; this is what distinguishes it from simply
handing the attacker a weaker rule set.

\emph{3. In terms of protocol design.} It models the realistic situation in which
nonce or key material is drawn from a bounded pool or regenerated on a cycle:
counters taken modulo $k$, rotated keys, finite entropy pools, cached
challenges. The security question then becomes the dual of the usual one. Rather
than ``bounded sessions over unbounded fresh material'', we ask whether secrecy
survives \emph{unbounded deployment} built on \emph{bounded fresh material}.

\noindent
The two examples in Section~\ref{sec:examples} make these three perspectives  concrete:
Example~\ref{sec:ex-freshness} shows the importance of freshness bounds when the
number of sessions left unbounded, and Example~\ref{sec:ex-param} evolves around what parameterisation per se brings or means (in).

%% file: table.tex
\begin{table}[H]
\centering
\small
\begin{tabularx}{\linewidth}{|l|X|X|}
\hline
\textbf{Aspect} & \textbf{Classical DY secrecy } & \textbf{Parameterised Secrecy } \\
\hline

\textbf{Main question} &
Secrecy: Does a given protocol have a leaky run? &
Parameterised Secrecy: Does a protocol-model of \emph{size} $n$ have a leaky run, for  all sizes $n$? (where $size$ is the number of role-instances) \\
\hline

\textbf{Baseline result} &
\textbf{Undecidable in general}, if either the number of nonces or the message-length are unbounded /
\textbf{Decidable}, if various bounds apply, e.g., $(T,k)$-bounded protocols~\cite{Tip05} &
\textbf{Undecidable in general}, due to DY derivations as well as the DY-attacker making parameterised models non-uniform~\cite{PVSurvey2016,AAMAS2016} / \textbf{Decidable} under bounded freshness + well-typed substitutions' assumptions (THIS WORK) \\
\hline

\textbf{View of sessions} &
Unbounded &
Unbounded and \textbf{parameterised by an arbitrary number of agents/role-instances} \\
\hline

\textbf{Attacker model} &
Full Dolev--Yao  &
\textbf{DY attacker restricted} to \textbf{well-typed substitutions}, i.e., excludes type-flaw attacks (THIS WORK) \\
\hline

\textbf{Role of bounded freshness} &
Alone does not yield a structural account of executions or scaling in number of sessions &
Induces \textbf{finite representative behaviour} via agent/term collapse \\
\hline

\textbf{Effect on executions} &
No general reduction from large to small systems &
\textbf{Large systems simulate smaller ones} via collapsing maps preserving DY derivability \\
\hline

\textbf{Proof methodology} &
Proof-theoretic or reduction-based  &
Using \textbf{well-structured transition systems (WSTS)}\cite{FinkelSchnoebelen01}, their coverability \\
\hline

\textbf{Structural insight} &
No general one on  how attacks scale with number of sessions &
\textbf{Cutoff theorem}: attacks in unbounded systems appear in bounded-size ones \\
\hline

\textbf{Algorithmic outcome} &
Either undecidability or bespoke bounded exploration &
\textbf{Finite exploration procedures} justified by cutoff or WSTS \\
\hline

\textbf{Interpretation of result} &  ``Classical-DY secrecy is decidable under  well-typed substitutions''.~\cite{RamSu03b} &
``Parameterised secrecy is decidable under bounded freshness per role and well-typed substitutions, i.e., then, secrecy for all session-counts reduces to finitely many representative systems'' (THIS WORK) \\
\hline

\end{tabularx}

\vspace*{0.1cm}
\caption{Deciding Classical Dolev--Yao Secrecy vs.\ This Work (Deciding Parameterised Dolev--Yao Secrecy)}
 \label{tab:intro}
\end{table}

%% file: relw.tex
A significant amount of related work was discussed in the introduction. We add and detail it now.

\subsection{Deciding and/or Verifying Dolev-Yao Security}

Deciding secrecy under the Dolev--Yao intruder has a long history. Early
undecidability results show that secrecy is undecidable when protocols allow
unbounded sessions, unbounded freshness, and unrestricted message growth~\cite{Dur99}.
Ramanujam and Suresh were among the first to recover decidability by bounding the
number of fresh nonces, under syntactic constraints such as tagging and
well-typedness~\cite{RamSu03}. Their results were extended in several directions.
D'Osualdo \emph{et al.}~\cite{DOsualdoOngTiu2017} decide secrecy for
depth-bounded processes via a structural-boundedness notion and WSTS; this is
conceptually close to ours but addresses a different axis of unboundedness and
yields no parameterised cut-off. Cortier, Delaune, and Sundararajan~\cite{CDS21}
identify a decidable class for reachability and equivalence based on type
compliance, without restricting the attacker but constraining the protocol class;
we do the complementary thing, constraining the attacker via well-typed
substitutions while leaving the protocol class unrestricted, and obtain a
\emph{uniform} result across all sizes. Froeschle~\cite{Froschle15} decides
leakiness for well-founded protocols by restricting execution shape; we instead
obtain termination through bounded freshness and parameterised collapse.
Equivalence-based results (Chr\'etien, Cortier, Delaune~\cite{CCD15}; dynamic
tagging by Arapinis, Delaune, Kremer~\cite{ADK14}) further highlight the role of
typing in taming complexity. Classically, Comon-Lundh and Cortier~\cite{Cortier}
established a \emph{small-model property} for secrecy and authentication, with
related complexity results in~\cite{MSR04,TipleaBEB08}.

\subsection{Parameterised verification}

 Parameterised
verification~\cite{PVSurvey2016} is well studied for distributed and concurrent
systems but has not been applied to symbolic protocols with a \emph{full}
Dolev--Yao intruder, one reason being that the DY intruder breaks the uniformity
that parameterised decidability arguments exploit~\cite{PVSurvey2016}. We bridge
this gap: under bounded freshness per role, parameterised DY secrecy is decidable
and admits a cut-off, combining ideas from the bounded-nonce
literature~\cite{RamSu03} with well-structured transition
systems~\cite{FinkelSchnoebelen01} through a uniform agent-collapsing preorder
that preserves Dolev--Yao derivability.

%% file: struct.tex
\bigskip

The rest of the  paper is structured as follows. Section~\ref{prelim} gives preliminaries. Section~\ref{sec:protocol-model} introduces a security-protocols' model that is parameterised on the number of roles, and defines the problem of  parameterised secrecy, in general.  Section~\ref{sec:examples} gives two worked examples: one that isolates our freshness setting, and one focused on the meaning of parameterisation. Section~\ref{sec:bounds}
introduces the concept of maps/collapsing-functions on agents which induce maps on terms as well as an order on the states of the protocol model.
 Section~\ref{sec:wqo} specialises the general protocol model, the maps and the order on states to the case of bounded freshness in the case of arbitrary many sessions, i.e., bounded number  of protocol nonces for unboundedly many agents.  Section~\ref{sec:decid} goes on to show that this latter order is key in deciding  parameterised secrecy in this case of bounded freshness.  For this problem, Section~\ref{sec:decision} gives two decision procedures, with asymptotic bounds on how many sessions one needs to consider to decide secrecy in our parameterised semantics and including complexity estimates. Section~\ref{sec:scope} delimits the scope and limitations of the approach.

%% file: dy-stuff.tex
We work in the symbolic (Dolev--Yao) model of cryptographic protocols
\cite{DY83}, where cryptographic primitives are assumed perfect and
attacker capabilities are captured by term construction and deduction.

\vspace*{0.0cm}
\paragraph{\textbf{Signature and terms}}
Let $\Sigma$ be a finite algebraic signature consisting of:
(i) atomic sorts for \emph{names}, \emph{nonces}, and \emph{keys};
(ii) constructors for pairing $\langle\cdot,\cdot\rangle$;
and (iii) cryptographic constructors, including symmetric encryption
$\mathsf{enc}_s(\cdot,\cdot)$ and public-key encryption
$\mathsf{enc}_{pk}(\cdot,\cdot)$.
Let $\mathcal{A}$ denote the  set of atomic terms and
$\mathcal{T}$ the set of ground terms freely generated from $\mathcal{A}$
using $\Sigma$.

\vspace*{0.0cm}
\paragraph{\textbf{Typing and substitutions}}
We assume a typing function $\tau : \mathcal{T} \to \Delta$ that assigns a
type to each term and is homomorphic over constructors.
A substitution $\sigma$ is a function assigning terms to terms and it models the attacker changing correct protocol messages
for arbitrary ones. A substitution is \emph{well-typed} if
$\tau(\sigma(x)) = \tau(x)$ for all variables $x$.
Only well-typed substitutions are allowed in our protocol executions, i.e., in our
attacker actions; this excludes ill-typed message forgery.

\vspace*{0.0cm}
\paragraph{\textbf{Dolev--Yao deduction}}
Attacker knowledge is represented by a set of terms $K \subseteq \mathcal{T}$.
Its deductive closure is defined using the standard Dolev--Yao inference rules,
which decompose messages (analysis) and build new ones (synthesis).
As per the expected, we use Paulson's two monotone operators~\cite{Paulson:1998}:
\begin{itemize}
  \item $\mathsf{analz}(K)$, closing $K$ under projections and decryptions
        when the appropriate keys are known;
  \item $\mathsf{synth}(K)$, closing $K$ under pairing and encryption.
\end{itemize}
The \emph{Dolev--Yao closure} of $K$ is the least fixed point ($lfp$)
\[
  \mathsf{DY}(K) \;\stackrel{\mathrm{def}}{=}\;
  \mathsf{lfp}\bigl(X \mapsto \mathsf{synth}(\mathsf{analz}(X))\bigr).
\]
A term $t$ is said to be \emph{derivable} by the adversary if
$t \in \mathsf{DY}(K)$.
Such a Dolev-Yao inference system, and related notions such as  normal forms of
$\mathsf{DY}$-derivations are given in full in Section~\ref{app:dy}.

%% file: dy-app.tex
We give the full inference rules for $\mathsf{analz}$ and $\mathsf{synth}$,
prove closure and normal-form properties, and we recall standard bounds on
``DY-derivation height''. We follow the notations in~\cite{SureshThesis}.

\subsubsection{DY synth/analz rules}

\paragraph{Signature, atoms, and terms}
Fix a simple algebraic signature $\Sigma$ with constructors for pairing $\pair{\cdot}{\cdot}$,
symmetric encryption $\encs{\cdot}{\cdot}$, public-key encryption $\encpk{\cdot}{\cdot}$,
hashing $\hash{\cdot}$, and atomic types for nonces, keys, names, etc.
Let $\Atoms$ denote the set of atoms, and let $\Term$ be the set of terms built
from $\Atoms$ using $\Sigma$.
Typing is as in the main text (well-typed substitutions only).

\paragraph{Intruder knowledge and closure.}
For any $X\subseteq \Term$, define two monotone operators
$\synth, \analz : 2^{\Term} \to 2^{\Term}$:
\[
\begin{array}{lcl l}
\synth(X) &=& X \cup
 \{\pair{u}{v} \mid u,v\in X\}
 \cup \\
 &  ~ & \{\encs{u}{k} \mid u,k\in X\} \cup \\
  &  ~ &  \{\encpk{u}{K} \mid u,K\in X\} \cup \\

   &  ~ & \{\hash{u} \mid u\in X\},\\[0.3em]
\analz(X) &=& X \cup
 \{u \mid \exists v.\ \pair{u}{v}\in X\}
 \cup \\
  &  ~ &  \{v \mid \exists u.\ \pair{u}{v}\in X\}\\
&&\quad \cup\{u \mid \exists k.\ \encs{u}{k}\in X \wedge k\in X\} \\
&&  \cup\{u \mid \exists K^{-1}.\ \encpk{u}{\pk(K)}\in X \wedge \\
&& ~~~~~~~~~~~~~~~~~K^{-1}\in X\}.
\end{array}
\]

Hashes are one-way: $\hash{u}$ never yields $u$ in $\analz$.
The Dolev-Yao closure of $X$ is
\[
\know(X) := \mathrm{lfp}\bigl(Y \mapsto \synth(\analz(Y))\bigr)
\quad\text{with }X\subseteq Y.
\]

\paragraph{Inference rules.}
Equivalently, $\know(X)$ is the set of terms derivable by the following rules
(starting from hypotheses $X$).

\scalebox{0.9}{
\begin{mathpar}

\inferrule*[right=Ax]{~}{\Gamma,\,u \vdashDY u}

\inferrule*[right=Pair]{\Gamma \vdashDY u \\ \Gamma \vdashDY v}
{\Gamma \vdashDY \pair{u}{v}}
\and
\inferrule*[right=Fst]{\Gamma \vdashDY \pair{u}{v}}
{\Gamma \vdashDY u}
\and
\inferrule*[right=Snd]{\Gamma \vdashDY \pair{u}{v}}
{\Gamma \vdashDY v}

\inferrule*[right=SEnc]{\Gamma \vdashDY u \\ \Gamma \vdashDY k}
{\Gamma \vdashDY \encs{u}{k}}
\and
\inferrule*[right=SDec]{\Gamma \vdashDY \encs{u}{k} \\ \Gamma \vdashDY k}
{\Gamma \vdashDY u}

\inferrule*[right=PKEnc]{\Gamma \vdashDY u \\ \Gamma \vdashDY \pk(K)}
{\Gamma \vdashDY \encpk{u}{\pk(K)}}
\\\vspace*{0.6em}
\and
\inferrule*[right=PKDec]{\Gamma \vdashDY \encpk{u}{\pk(K)} \\ \Gamma \vdashDY K^{-1}}
{\Gamma \vdashDY u}

\inferrule*[right=Hash]{\Gamma \vdashDY u}{\Gamma \vdashDY \hash{u}}

\end{mathpar}
}

\subsubsection{Normal-form DY Proofs } \label{app:ex}

The material in this subsection is used in Lemma~\ref{lem:normalisation}
and, more importantly, in Proposition~\ref{prop-bound},
where bounded-height normal DY derivations provide a computable
bound on the length of leak witnesses after terminality.

We now describe aspects linked to normal-form DY proofs, in the style of~\cite{SureshThesis}.

\begin{lemma}[Normal-form theorem~\cite{SureshThesis}] \label{cfp}
Fix the signature $\Sigma$ and a finite set of atoms $A$.
There exists a computable function $g$ depending only on $\Sigma$
such that for any $K\subseteq \Term$ and any ground term $t$:

$t\in \know(A\cup K)
\quad\Longleftrightarrow $ \\
$\exists\text{ cut-free proof of }A\cup K\vdashDY t
\text{ of height }\le g(|A|)$
\end{lemma}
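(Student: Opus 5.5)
The plan is to prove the statement in three stages: cut elimination, the subterm property, and a height bound. The direction ``$\Leftarrow$'' is immediate, since every rule in the inference system above is sound for $\know$. So the work is in ``$\Rightarrow$''.

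\textbf{Stage 1: cut elimination.} Call a \emph{cut} an introduction rule (Pair, SEnc, PKEnc) whose conclusion is the major premise of the matching elimination (Fst/Snd, SDec, PKDec). Each such redex can be removed locally:
\begin{itemize}
\item Pair followed by Fst is replaced by the left subproof.
\item SEnc followed by SDec is replaced by the subproof of $u$, discarding both proofs of $k$.
\item PKEnc followed by PKDec is handled the same way.
\end{itemize}
Every reduction strictly decreases the proof size, so repeated reduction terminates in a cut-free proof.

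\textbf{Stage 2: shape and subterm property.} A routine induction then shows that a cut-free proof has a two-phase shape. Along every branch, eliminations (analysis) occur only above introductions (synthesis). As a consequence, every term in a maximal elimination path is a subterm of a hypothesis in $A\cup K$, and every term in the synthesis part is a subterm of $t$ or of a key used in the analysis part. This is the standard $\synth(\analz(\cdot))$ factorisation and matches the fixed-point definition of $\know$.

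\textbf{Stage 3: the height bound.} Here I would use the typing discipline, since the statement makes $g$ depend only on $\Sigma$ and $|A|$. Because substitutions are well-typed, every term that can occur in a run has a type built from the finitely many message types fixed by $\Sigma$ and the protocol. Hence term depth is bounded by a constant $d_\Sigma$. Without this, the bound cannot be independent of $t$ and $K$, so I would state explicitly that $K$ and $t$ range over well-typed terms.

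Under this assumption, the synthesis phase has height at most $d_\Sigma$. A single elimination path from a hypothesis has length at most $d_\Sigma$. The subtle point is the side premises of SDec and PKDec: a decryption key may itself need to be analysed out of another hypothesis, whose decryption needs a further key, and so on. Keys are atoms, so each such key lies in $A$ or is an atomic subterm of $K$; I would first argue that only atoms of $A$ matter, or count them among the $|A|$ atoms.

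\textbf{Key-chain pruning.} I then claim that a height-minimal cut-free proof never derives the same key twice along one branch. If $k$ appears as a side-premise conclusion above another occurrence of $k$ on the same branch, the inner subproof of $k$ can be replaced by the outer one's premise-free copy, which strictly reduces height. Consequently, key-derivation chains have length at most $|A|$, and each link contributes at most $2d_\Sigma$ levels. This yields $g(|A|) = O(|A|\cdot d_\Sigma)$, which is computable.

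\textbf{Main obstacle.} I expect the pruning argument for nested key derivations to be the hardest step. The replacement must preserve cut-freeness, and the ``same key twice'' surgery must be carried out on a fixed branch without lengthening other branches. I would handle this by choosing, among all cut-free proofs, one of minimal height and then arguing by contradiction.
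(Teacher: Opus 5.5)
You have the standard route: cut elimination, then the subterm property, then a height count. The paper gives no proof of its own here; it only cites \cite{SureshThesis}. You are also right that the lemma cannot hold as literally stated. Take $A=\{a\}$, $K=\emptyset$ and $t$ a pair term of nesting depth $N$ over $a$: every cut-free proof of $t$ has height at least $N$.

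Your repair is not yet a proof, for three reasons.
\begin{itemize}
\item \emph{Wrong hypothesis.} ``$K$ and $t$ range over well-typed terms'' is not what Stage~3 uses. Under the homomorphic $\tau$ every ground term is typed, and well-typedness is a property of substitutions, not of term sets. What you need is a depth bound $d$ on $K\cup\{t\}$, which the run semantics supplies. Then $g$ depends on $d$, and hence on the protocol, not on $\Sigma$ alone.
\item \emph{The key option fails.} Your first option for keys, ``only atoms of $A$ matter'', is false even with bounded depth and atomic keys. Take $A=\{k_N\}$, $t=a$ and
\[
K=\{\encs{a}{k_1},\ \encs{k_1}{k_2},\ \ldots,\ \encs{k_{N-1}}{k_N}\}.
\]
All terms have depth at most $1$ and $|A|=1$. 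Yet every proof of $a$ must peel the whole chain, so its height is at least $N$. You must therefore commit to the other option: $A$ contains every atom occurring in $K$. This is also how the paper uses the lemma in Proposition~\ref{prop-bound}.
\item \emph{Atomic keys must be explicit.} The displayed $\synth$ and \textsc{SEnc} rules allow compound keys. With compound keys, side premises can be synthesised, your two-phase shape fails on side-premise branches, and distinct keys are no longer bounded by $|A|$.
\end{itemize}

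The pruning step has a smaller, fixable flaw. Splicing the upper subproof of a repeated key in place of the lower one need not lower the height of the whole proof, because another branch may be equally tall. So there is no contradiction with height-minimality. The surgery does three useful things, though:
\begin{itemize}
\item it strictly decreases size;
\item it never increases height;
\item it preserves cut-freeness, because the spliced conclusion is an atomic side premise, so no introduction can meet an elimination at the splice point.
\end{itemize}
So take a cut-free proof of minimal size, or simply iterate the surgery. With that fix, your count gives $g(|A|,d)=O(|A|\cdot d)$ for the corrected statement.

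There is also a cruder alternative that needs no key-chain analysis. By the subterm property, every term in a cut-free proof is one of the finitely many terms of depth at most $d$ over $A$. A branch with no repeated term cannot be longer than their number, which already gives a computable bound.
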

See ~\cite{SureshThesis} for proof.

\medskip

To understand this, consider the following simplified situation, as it could occur at or after a
terminal state in our parameterised system.
Let the intruder initially know
\[
K_0 = \{\,a,\; b,\; k,\; \encs{\langle a,b\rangle}{k}\,\}.
\]
We ask whether the atomic message $a$ can be derived, i.e.\
whether $K_0 \vdash a$.

\smallskip
\noindent
A minimal DY derivation $\mathcal{D}_0$ proceeds as follows:
\begin{enumerate}
\item From $\encs{\langle a,b\rangle}{k}$ and $k$, apply rule \textsf{SDec} to
obtain $\langle a,b\rangle$.
\item From $\langle a,b\rangle$, apply \textsf{Fst} to obtain $a$.
\end{enumerate}
The corresponding proof tree is:

\begin{mathpar}
\inferrule*[right=SDec]{
  \inferrule*[right=Ax]{ }{K_0 \vdash \encs{\langle a,b\rangle}{k}} \\
  \inferrule*[right=Ax]{ }{K_0 \vdash k}
}{
  K_0 \vdash \langle a,b\rangle
}
\\
\inferrule*[right=Fst]{
  K_0 \vdash \langle a,b\rangle
}{
  K_0 \vdash a
}
\end{mathpar}

\noindent
This proof has height~$2$ and is already cut-free: both inference chains are
strictly decreasing in the measure ``encryption depth plus pairing depth.''

\smallskip
\noindent
In non-normal form, one might interleave synthesis and analysis, e.g.\
constructing $\langle a,b\rangle$ from $a$ and~$b$ and then decrypting again, giving
a redundant loop of height~$4$:
\[
\textsf{Pair} \;\circ\; \textsf{SDec} \;\circ\; \textsf{Pair} \;\circ\; \textsf{Fst}.
\]
The normalisation procedure collapses these unnecessary
synth/analz alternations to the minimal form above, whose height is bounded by
$g(|A|)=2$ for this atom basis $A=\{a,b,k\}$.

\smallskip
\noindent
In Proposition~\ref{prop-bound}, this height bound translates to a distance bound along any
run $\pi$:
if $s_\tau$ is the terminal state where $\encs{\langle a,b\rangle}{k}\in s_{\tau,I}$,
then within at most $c\!\cdot\!g(|A|)=2c$ transitions the intruder learns~$a$,
i.e.\ the leak occurs by step $\tau+2c$.

\smallskip
\noindent
This illustrates concretely how a \emph{cut-free DY proof of bounded height}
yields a uniform finite witness bound $k=f(|Pr|,\#\text{bound})$
for all leaky runs in the Dolev-Yao semantics.

%% file: wsts-b.tex
We briefly recall \emph{well-structured transition systems} (WSTS), a generic
framework for decidability in infinite-state systems that underlies our analysis;
we follow~\cite{FinkelSchnoebelen01}.

A \emph{quasi-order} (qo) is a reflexive, transitive relation $\preceq$ on a set
$S$; it is a \emph{well-quasi-order} (wqo) if every infinite sequence
$s_0,s_1,\ldots$ contains indices $i<j$ with $s_i \preceq s_j$ (equivalently, no
infinite descending chains and no infinite antichains). A set $U \subseteq S$ is
\emph{upward-closed} if $s \in U$ and $s \preceq s'$ imply $s' \in U$.

A transition system $(S,\rightarrow)$ equipped with a wqo $\preceq$ is a
\emph{WSTS} if $\preceq$ is \emph{upward compatible} with $\rightarrow$: whenever
$s \preceq s'$ and $s \rightarrow t$, there is $t'$ with $s' \rightarrow^{*} t'$
and $t \preceq t'$. Intuitively, larger states $s'$ simulate smaller ones $s$. A
WSTS is \emph{effective} if $\preceq$ is decidable and predecessors of
upward-closed sets have a computable finite basis. The \emph{coverability problem}
asks whether some $t' \succeq t$ is reachable from $s$; in effective WSTS it is
decidable by backward reachability from $\uparrow t$, which stabilises by
well-quasi-ordering. This mechanism is central to our decidability and cut-off
arguments~\cite{FinkelSchnoebelen01,PVSurvey2016}.

%% file: prot-model.tex
\subsection{Introduction to our Protocol Model}
Our basis is  a commonplace, symbolic protocol model and the trace-based semantics of \cite{RamSu03}, as summarised in this subsection.

\paragraph{\textbf{Protocol Descriptions \& Roles}}
A \emph{protocol} consists of a finite set of \emph{roles}, denoted
$A,B,\ldots$.
Each role is given by a finite sequence of \emph{role actions} of the form $$ A \rightarrow B : t,$$
where $A$ is the sender \emph{role}, $B$ the intended receiver role, and
$t \in \Term$ is a symbolic message modelled as a term with atomic subterms or atoms.

Certain atoms that ``originate'' from or pertain to the role (e.g., nonces generated by that role and used as key material, long-term keys) are \emph{secret to the role} and denoted \emph{$Secret_{X}$} for a role $X$. All such secrets, i.e., $\bigcup_{X \in \Roles} Secret_X$ is denoted \emph{$Secret$}.

\paragraph{\textbf{Agents and Role-Instantiations}}
In protocol executions, an {unbounded} number of \emph{agents}  instantiate each role.
We write \emph{$i_A$}  for the $i$-th
agent executing role $A$, or simply \emph{agent $i$} when the role is clear.
Instantiating a \emph{role action $A \rightarrow B : t$} by agents
$i_A$ and $j_B$ yields \emph{concrete actions}
\[
  i_A! j_B : t
  \qquad\text{and}\qquad
  j_B?\,\cdot : t,
\]
representing respectively the emission and reception of the message. The \emph{``$\cdot$''} anticipates the usual Dolev-Yao semantics whereby all message received are mediated by the attacker (see  paragraph ``Transitions'' in Subsection~\ref{actions}, for details).

\underline{Note}:  We work under the commonplace setting in security-protocol verification  that agents are \emph{uniform} instantiations of roles (i.e., identical up to renaming, per role), and that they are not aware of their own indexing (i.e., agent $i$-th of role $X$ does not know it is the $i$-th) and their indexing is not used as part of the functionality of the protocol.  In parameterised-verification terminology~\cite{FinkelSchnoebelen01}, role instantiations are homogeneous.

\paragraph{\textbf{Notation for actions}}
\label{par:action-notation}
For a concrete action $a$ we write $a.\mathrm{ag}$ for the acting agent and
$a.\mathrm{step}$ for the position of $a$ in that agent's role script; and we
write $\epsilon$ for the empty action sequence. These are used when we lift maps
to runs in Section~\ref{sec:bounds}.

\paragraph{\textbf{Substitutions and typing}}
Variables occurring in role messages are instantiated at execution time
by substitutions,  modelling a Dolev-Yao attacker's intervention onto all communication channels as we detail in  Section~\ref{sec:protocol-semantics} below.
As we said in Section~\ref{prelim}, only \emph{well-typed substitutions} are permitted.

\paragraph{\textbf{Intruder model}}
We assume a single \emph{distinguished agent $I$} representing  the  Dolev-Yao intruder.  \emph{Intruder knowledge} is represented as a set \emph{$K_{I} \subseteq \Term$} and
evolves monotonically according to the Dolev--Yao deduction system detailed in Section~\ref{app:dy}.
We write \emph{$K \vdashDY t$} when $t$ is derivable from $K$, i.e.,
$t \in \know(K)$.
The normal, Dolev-Yao semantics is detailed below in Section~\ref{actionssec}.

\subsection{Protocol Semantics and Parameterised Security Systems}
\label{sec:protocol-semantics}

We now give our operational  semantics of protocols, i.e., we  define a parameterised transition-system model encoding protocol executions.

\paragraph*{\textbf{Local and Global States}}

The \emph{local state of an honest agent $i$} is a pair  \emph{$s_i = (\mathsf{step}_i, K_i)$},
where $\mathsf{step}_i$ denotes the current position in the role
specification, and $K_i \subseteq \Term$ is the set of terms known by
agent $i$.
The \emph{local state of the intruder $I$} is
  \emph{$ s_I = K_I$},
where $K_I \subseteq \Term$ is the intruder's knowledge.

For a fixed system size
  $ n = (n_A)_{A \in \mathsf{Roles}}$,
a \emph{global state} is a  tuple \emph{\[s = ((s_{1_A},\ldots,s_{n_A} \mid A \in \mathsf{Roles}),\; s_I),\]} which, for all roles $A$ (i.e., $A, B , C, \ldots$), for all honest agents $i_A$ of that role (i.e., $1_A, 2_A, \ldots, 5_B, \ldots, 7_C, \ldots$),  collects their local states,   and that of the intruder's.

\subsection{Runs and Systems of Size \texorpdfstring{$n$}{n}} \label{actionssec}

\emph{Admissible initial states}  encode a setup for the protocol start (i.e., agents' steps set to $0$), and include arbitrary well-typed instantiations of role
variables that originated from that role, particularly: well-typed instantiations for $Secret_{X}$ for (any number of) agents of role $X$, next to any  constants,  and well-typed instantiations of  $Secret_{I}$.

\paragraph{\textbf{Transitions}} \label{actions}
A \emph{transition} is an action $\alpha$ taking place at system-state $s$: \emph{$s \xrightarrow{\alpha} s'$}, where $\alpha$  is a  send action $ i_A! j_X : t,
  ~\text{or } \text{ a receive action }\\
  j_B?\,\cdot : t$.

The enablement of transitions and state updates are in line with~\cite{RamSu03}.  %
\\ -- A \emph{send action $i_A! \cdot : t$}
is \emph{enabled} at a state $s$ if: \\
~~~ (a) agent $i_A$ is in the corresponding protocol step for sending $t$ and can compose term $t$  at its state $s_{i_{X}}$, or  \\
~~~ (b) if message $t$  derivable from the current intruder's knowledge $s_{I}$
via a well-typed substitution.

In the first case $(a)$, $t$ is  added   to the intruder's knowledge and  the $step$ variable of agent $i$ is updated. \\
In the second case $(b)$, $t$ is added to the knowledge of a recipient agent $j$.  %
\\-- A \emph{receive action $\cdot! j_X : t$}
updates the state of agent $j_X$ if $t$ is match-received~\cite{RuTu03} to the state of $s_{j_{X}}$, i.e., the subterms of $t$ that are already in $s_{j_{X}}$ match those in  $t$. \\
~~~ The update consists of adding decomposed subterms of $t$ that are not already in $s_{j_{X}}$ to the new state $s'_{j_{X}}$,  and the step variable of agent $j$ is updated.

\paragraph{\textbf{Runs}} A \emph{run} is a  sequence of transitions
  $$s_0 \xrightarrow{\alpha_1} s_1 \xrightarrow{\alpha_2} \cdots \xrightarrow{\alpha_k} s_k \ldots,$$
starting from an admissible initial state $s_0$.
Runs are traces in the sense of \cite{RamSu03}, with intruder
interception implicit at every send, as the semantics of enabled transitions $s \xrightarrow{\alpha} s'$  above explained.

\paragraph{\textbf{Systems of Fixed Size}}
For a fixed \emph{$n=(n_A, n_B, \ldots)_{A, B, \ldots \Roles}$}, the \emph{system of size $n$}, denoted \emph{$\mathbb{S}(n)$}, is the set of all global reachable states together with all possible transitions
 from {admissible} initial states instantiating $n$ such agents.

\subsection{Leaky Runs and Parameterised Secrecy} \label{sec:leakage}

\paragraph{Leakiness} \label{lr}
A run is \emph{leaky} if there exists a state $s$ along the run and a
secret atomic term $a$  which is learnt by the Intruder, i.e.:
\[
  s_0 \xrightarrow{\alpha_1} s_1 \xrightarrow{\alpha_2} \cdots
  \xrightarrow{\alpha} s \cdots \text{ and } a \in Secret \cap  K_{I}^s,
\]
where \emph{$K_{I}^s= {DY}(s)$}.

\begin{definition}[Parameterised secrecy] \label{defpsp}
The \emph{\textbf{parameterised secrecy problem (PSP)}} asks whether {for an arbitrarily fixed } $n,$ there exists a leaky run in the parameterised system $\mathbb{S}(n)$ of size $n$.

~

Equivalently, a protocol is \emph{parameterised-secure} if no leaky run exists
for any system size.
\end{definition}

%% file: example-freshness.tex

\subsection{Example 1: One Protocol, Two Freshness Regimes}
\label{sec:ex-freshness}

We begin with a deliberately small, made-up protocol whose sole purpose is to
isolate our \emph{freshness} settings. The number of sessions is left unbounded
throughout; only the freshness assumption changes. The example shows three
things at once: that bounded freshness genuinely \emph{changes} the secrecy
question, that it nonetheless keeps it \emph{hard} (the witness is not of
constant size), and that our collapse does not manufacture spurious attacks.

\paragraph{\textbf{The protocol $\Pex$}}
There is one initiator role $A$ and one responder identity $R$, which $A$ does
\emph{not} authenticate; in a Dolev--Yao execution $R$ may therefore be an
honest responder $B$ or the intruder $I$. Each agent $i_A$ of role $A$ owns a
single nonce $n_{i}$ (of type \textsf{nonce}), subject to the per-role freshness
bound $k_A$. Each honest responder $B$ owns a secret $s_B \in \Secret$. The
protocol has two role-actions,
\[
\begin{array}{ll}
  (1)~ A \rightarrow R : & \encpk{n_{A}}{\pk{R}},\\[2pt]
  (2)~ R \rightarrow A : & \encs{s_{R}}{n_{A}},
\end{array}
\]
where $\pk{R}$ is $R$'s public key (public), and in action $(2)$ the honest
responder returns its own secret keyed \emph{symmetrically} under the nonce it
just received. The datum to be kept secret is $s_B$ (equivalently, any honest
responder's secret). Only the constructors $\mathsf{enc}_{pk}$,
$\mathsf{enc}_{s}$ and atomic keys are used, so $\Pex$ is within our considered class of  protocols.

\paragraph{\textbf{Regime 1: unbounded freshness}}
Suppose each $A$-agent draws a truly fresh nonce, so distinct agents own
distinct nonces. We claim $\Pex$ is parameterised-secure: $s_B$ is not derivable
for any system size $n$.

\begin{quote}
\emph{Invariant.} In any reachable state, the only network terms are of the form
$\encpk{n_i}{\pk{R_i}}$ and $\encs{s_{R_i}}{n_i}$. A nonce $n_i$ enters
$\know(K_I)$ only if some session sealed it to $R_i = I$, that is, only if
agent $i$ addressed the intruder. For an honest-responder session $i$ (with
$R_i = B$), the nonce $n_i$ is encrypted under $\pk{B}$ and appears nowhere else,
so $n_i \notin \know(K_I)$; hence $\encs{s_B}{n_i}$ cannot be opened and
$s_B \notin \know(K_I)$.
\end{quote}

\noindent
Because under unbounded freshness the nonce of an honest-responder session is
distinct from every intruder-facing session's nonce, knowledge gathered from
intruder-facing sessions is useless against honest ones. Secrecy holds
uniformly in $n$.

\paragraph{\textbf{Regime 2: bounded freshness}}
Now impose the per-role bound $k_A$. With more than $k_A$ agents of role $A$, the
canonical collapse forces two agents to share a nonce (Definition~\ref{cdf}).
This is exactly the case the invariant above excluded. Consider two $A$-agents
whose nonces coincide at the atom $n$: agent $i$ addressed the intruder
(so $n \in \know(K_I)$ after action $(1)$ of session $i$), while agent
$i'$ addressed an honest $B$ (so the network carries $\encs{s_B}{n}$ after
action $(2)$ of session $i'$). The intruder now decrypts and learns $s_B$: the
run is leaky.

The witness is not of constant size. To \emph{force} a collision between an
intruder-facing nonce and an honest-facing one, one needs, by pigeonhole,
$k_A + 1$ agents of role $A$ (together with a single honest $B$). Thus the
minimal leaky system has size $\Theta(k_A)$ in role $A$: the required number of
sessions equals the freshness bound, not some fixed constant. This is precisely
why bounded freshness is \emph{not} the same as bounding the number of sessions,
and why the cut-off in Section~\ref{sec:decid} lands on $k_A$ rather than on an
arbitrarily chosen small number.


%% file: example-param.tex

\subsection{Example 2: Parameterisation as a Natural Security Question}
\label{sec:ex-param}

If Example~\ref{sec:ex-freshness} isolated freshness, we now look at 
\emph{parameterisation}. Freshness is held fixed (bounded per role), and the
quantity that varies is the size of the population, which here is genuinely part
of the security question rather than a modelling convenience.

\paragraph{\textbf{The setting}}
Consider a server-based key-distribution protocol $\Pksd$ of Needham--Schroeder
shared-key style. There is a trusted server $S$ that shares a long-term
symmetric key $K_{XS}$ with every principal $X$ drawn from a population of size
$n$. An initiator $A$ obtains a fresh session key $k_{AB}$ for a responder $B$:
\[
\begin{small}
\begin{array}{ll}
  (1)~ A \rightarrow S : & \pair{A}{\pair{B}{n_A}},\\[2pt]
  (2)~ S \rightarrow A : & \encs{\pair{n_A}{\pair{B}{\pair{k_{AB}}{\encs{\pair{k_{AB}}{A}}{K_{BS}}}}}}{K_{AS}},\\[2pt]
  (3)~ A \rightarrow B : & \encs{\pair{k_{AB}}{A}}{K_{BS}}.
\end{array}
\end{small}
\]
Only symmetric encryption, pairing, nonces and atomic keys occur, so $\Pksd$ is
within our fragment. The property of interest is secrecy of the session key
$k_{AB}$ between two honest principals.

\paragraph{\textbf{Why the parameter is intrinsic}}
The question a designer actually cares about is not ``is $k_{AB}$ secret in a
system of two principals?'' but ``does $k_{AB}$ remain secret however large the
user base grows, and whichever other principals, honest or compromised, are
present?''. In our terms, each additional principal is one more agent
instantiating role $A$ or role $B$, so the population size $n$ is exactly the
parameter of the family $\Sys(n)$. Parameterised secrecy asks whether a leaky
run exists for \emph{some} $n$; parameterised security asserts secrecy for
\emph{every} $n$. The number of role instances is here a first-class security
variable, not an artefact.

\paragraph{\textbf{What the cut-off brings}}
Verification practice analyses a handful of principals with a tool and then
trusts that the conclusion carries over to an unbounded deployment. Under
bounded freshness and well-typed substitutions, our cut-off theorem
(Theorem~\ref{thm:cutoff}) turns that folklore into a proven,
concrete bound: secrecy of $k_{AB}$ across \emph{all} population sizes follows
from its secrecy in systems with at most $c_A, c_B, c_S$ agents per role, where
$c_X = k_X$. A larger population contributes only further agents of the same
roles, and any such agent collapses onto one of the canonical representatives
under $h_c$ while preserving the intruder's derivations. Hence no population
size beyond the cut-off can exhibit a secrecy violation that a bounded one does
not already witness. Here freshness never varies, only the population does: the
example shows that the parameterised reading is the natural one for real
key-distribution deployments, and that the cut-off supplies the missing
justification for the standard small-instance verification methodology. Taken
with Example~\ref{sec:ex-freshness}, the two examples separate the axes cleanly.

%% file: boundedness.tex
In this section, we introduce the idea of maps/collapsing-functions on agents which lift to maps on terms and actions, and, with that, induce an order on the states of the protocol model.
The order over states has to be such that it preserves DY-derivability up to the terms' collapsing.  Figure~\ref{fig:bos-canonical-simple} summarises these generic concepts of reducing a larger system to a smaller one, in a way that the order over state preserves DY-derivability.

\subsection{Good Agent Maps}
\label{subsec:good-agent-maps}

Let $\Ag$ be the (countably infinite) set of agents, partitioned by roles.
Recall that each agent $i_X \in \Ag$ instantiates a unique role $X$.

\begin{definition}[Good agent map] \label{def1map}
A function
 \emph{$ h : \Ag \rightarrow \Ag
$}
is a \emph{good agent map} if it satisfies the following conditions:
\begin{enumerate}
  \item \emph{Role-respecting:} for all agents $i_X \in \Ag$,
        $h(i_X)$ is an agent instantiating the same role $X$;
  \item \emph{Surjective but not injective:} every agent has at least one
        preimage under $h$, and at least two distinct agents are mapped to
        the same image.
\end{enumerate}
\end{definition}

Intuitively, a good agent map collapses multiple agents of the same role
into a smaller set of representative agents, while preserving role
structure.
Such maps will be used to relate executions of larger systems to
executions of smaller ones. An image showing this  is given in Figure~\ref{fig:bos-canonical-simple}, on page~\pageref{cdf}.

\subsection{Induced Term Maps}
\label{subsec:induced-term-maps}

A good agent map induces a corresponding transformation on terms.

\begin{definition}[Induced term map] \label{def2map}
Let $h : \Ag \to \Ag$ be a good agent map.
The  \emph{induced term map},
  \emph{$\hat{h} : \Term \rightarrow \Term$}
is defined by uniformly replacing, in any term $t$, all agent-indexed
atomic subterms pertaining to an agent $i$ by the corresponding atomic
terms pertaining to $h(i)$.
The map $\hat{h}$ is extended homomorphically to sets $2^{\Term}$ of  terms.
\end{definition}

Note that the induced map $\hat{h}$ preserves typing and the structure of
cryptographic constructors.

\subsection{Lifting Good Agent-Maps to Maps on Actions} \label{sec:lift-runs}
We now show how, naturally, we lift maps from agents and terms to actions.

\begin{definition}[Maps on Actions] \label{def:action-lift} Let $h:\Ag\to\Ag$ be a good agent map.

For a concrete action $a$, either a send $i!\cdot:t$ or a receive $i?\cdot:t$,
define $\tilde{h}(a)$ by
\begin{align*}
  \tilde{h}( i!\cdot:t ) &:= h(i)!\cdot:\hat{h}(t),\\
  \tilde{h}( i?\cdot:t ) &:= h(i)?\cdot:\hat{h}(t).
\end{align*}

\smallskip

Now, we also make explicit the extension of good agent maps from actions to finite runs.
We extend $\tilde{h}$ to finite action sequences $\pi \in Act^*$ inductively as follows.
Let $\pi = aa'\pi'$ with $a,a'\in Act$.

\begin{align*}
  \tilde{h}(\epsilon) &:= \epsilon,\\
  \tilde{h}(aa'\pi') &:= \tilde{h}(a)\tilde{h}(\pi')
    && \text{if } h(a.\mathrm{ag}) = h(a'.\mathrm{ag})
       \text{ and } \\
   ~& ~    && a.\mathrm{step} = a'.\mathrm{step},\\
  \tilde{h}(aa'\pi') &:= \tilde{h}(a)\tilde{h}(a')\tilde{h}(\pi')
    && \text{otherwise}.
\end{align*}

\smallskip
\end{definition}

The definition above says that, for a concrete action
$\act$ performed by an honest agent $i$ carrying a  term $t$, the induced map
$\widetilde{h}(\act)$ replaces agent $i$ with $h(i)$ and the term $t$ with $\hatmap{h}(t)$.
When two consecutive actions map to the same agent/step pair, the induced map allows
\emph{stuttering collapse}, by deleting duplicates (so, in other words, $\widetilde{h}$ can map
finite paths to single steps).

\begin{definition}[Maps on runs]
\label{def:run-lift}
Let $h:\Ag\to\Ag$ be a good agent map.

\begin{itemize}
\item For a single labelled action $\act$ performed by an honest agent $i$ with
payload $t$, define $\widetilde{h}(\act)$ as in
Definition~\ref{def:action-lift}.

\item Extend $\widetilde{h}$ to finite action sequences (runs)
$\pi=\act_1\act_2\cdots\act_k\in \mathrm{Act}^\ast$ by
\[
\widetilde{h}(\pi)
\;:=\;
\mathrm{stut}\bigl(\widetilde{h}(\act_1)\cdot
\widetilde{h}(\act_2)\cdots\widetilde{h}(\act_k)\bigr),
\]
where $\mathrm{stut}(\cdot)$ removes consecutive duplicate
agent/step pairs (stuttering collapse).
\end{itemize}
\end{definition}

\subsection{Our Preorder on Protocol-Model States}
\label{subsec:bos-relation}

We now define the preorder on system's states induced by the maps above.

\begin{definition}[Bound-based ordering (BOS)] \label{bosdefinition}
Let $s$ and $s'$ be two global states. We say they are in a \emph{BOS order} and we write
\[
  s \;\preceq\; s'
\]
if there exists a good agent map $h$ such that:
\begin{enumerate}
  \item for every honest agent $i$ in $s$, there exists an agent $j$ in
        $s'$ with the same role and
        \[
          \hat{h}(s'_j) = s_i;
        \]
  \item the intruder knowledge is preserved:
        \[
          \hat{h}(K_I^{s'}) = K_I^{s}.
        \]
\end{enumerate}
\end{definition}

Intuitively, $s \preceq s'$ holds when $s'$ can be obtained from $s$ by duplicating agents
and renaming them in a role-consistent manner. The BOS relation captures the
idea that larger systems simulate smaller ones modulo agent collapse.

Importantly, for a \emph{general} good map, BOS is defined in terms of equality of collapsed knowledge
$\hat{h}(K_I^{s'})= K_I^s$, and does not by itself imply equivalence of Dolev–Yao derivability.
Rather, derivability is preserved under $\hat{h}$ in the forward direction,
as established in Section~\ref{dy-fw}. (For the \emph{canonical} map committed to in Section~\ref{sec:wqo}, which relabels agents over a shared canonical atom set and hence identifies no distinct atoms, preservation holds in both directions; see Section~\ref{par:canonical-atoms}.)

~

\underline{Note.}  The maps $h$  and $\hat{h}$ of agents and terms  are so far abstract and have been defined independently of the size of the systems or of any bounds or restrictions. Many concrete maps $h$ may exist.
However, for the BOS order $\preceq$ to be as per  Definition~\ref{bosdefinition} (i.e., the order preserve the DY knowledge modulo the term-map), it is non-trivial to find  right concrete maps $h$  and $\hat{h}$.

%% file: wqo.tex

With the note above in mind, we now give a  specific instance of the agents' map which induces a specific BOS order on states. Later, we will prove that this specific order is a well quasi-order (wqo), providing the structural basis for the decidability argument developed in Section~\ref{sec:decid}.

\subsection{Nonce Bounds and Bounded Initial States}
\label{sec:nonce-bounds}

To give our specific map $h$ of agents and induce our specific order $\preceq$ on states, we set a systematic and parameterised bound on the number of nonces used in protocol executions. So, this subsection introduces this  ``bounded freshness'' assumption and other mathematical objects linked to it.

We parameterise
freshness by explicit \textbf{per-role bounds} and use these bounds to constrain the
family of admissible initial states for fixed-size systems $\mathbb{S}(n)$.

\begin{definition}[Freshness bound per role]
Let $\Roles$ be the finite set of protocol roles.
For each role $X \in \Roles$, a fixed \emph{positive natural number $k_X$} is called the
\emph{freshness bound of role $X$}.
\end{definition}

Intuitively, across any execution in a system of fixed size, each agent
instantiating role $X$ may generate at most $k_X$ fresh nonce-atoms of type
\textsf{nonce}.
For now, assume each bound $k_X$ to be super-exponential in the protocol-role size (i.e., in the number of atomic terms in the  description of the role $X$).

 This mechanism ensures that, although the number of agents is unbounded, the
set of distinct atoms that can arise in any run is globally bounded per role.

\subsection{Bounded Initial-State Families for Fixed Size}

As in Section~\ref{sec:protocol-model}, let $\mathbb{S}(n)$ be  system of size  $ n = (n_X)_{X \in \Roles}$,
where any arbitrarily large $n_X$ is the number of agents instantiating role $X$.

To enforce bounded freshness, we do \emph{not} pick a single initial state.
Instead, for each size $n$, we consider a \emph{(possibly infinite) set of admissible
initial states}, written \emph{$\mathbb{S}(n).\mathsf{Init}$}, where each $s_0 \in \mathbb{S}(n).\mathsf{Init}$
assigns:
\begin{itemize}
  \item to each honest agent $i_X$, a local state $(\mathsf{step}=0,K0_{i_X})$,
        where $K0_{i_X}$ contains the role-originated  public/shared atoms instantiated by
        well-typed substitutions
        and leaves role variables unspecified (to be instantiated later  by
        well-typed substitutions);
  \item to the intruder a knowledge set $K0_I$ containing the standard public
        information and any globally public keys, and their own fresh nonces instantiated by
        well-typed substitutions.
\end{itemize}

\input{concreteFig}

Importantly, $\mathbb{S}(n).\mathsf{Init}$ is generated \emph{subject to the bounds}
$k_X$:  the total number of  fresh nonces that can originate
from all agents $i_X$ during any run starting from $s_0$ needs to add up to  most $k_X$
distinct atoms. If the number of agents of role $X$ exceeds $k_X$, then admissible initial states
are constructed so that these agents reuse previously allocated nonce atoms.
Formally, this enforces that the set of distinct nonce atoms originating from
role $X$ is bounded by $k_X$.

\vspace*{0.1cm}
\subsubsection*{\textbf{A Preamble to the Rest of the Section}}

\paragraph{\textbf{ Finite and Infinite Structures  in Bounded Freshness}}\label{terminalityp}
A system  $\mathbb{S}(n)$ has an infinite set of initial states, for each size $n$, but each state $s_0 \in S(n).\mathsf{Init}$   has a bounded level of freshness for honest agents.
 Also, there is no mixing of systems of sizes $p$ and $q$, where  $p \neq q$.
 This means that the number of  fresh sessions to be created within any system of  any size $n$ is bounded and ``set'' apriori.

Further, because each run starting from $s_0 \in \mathbb{S}(n).\mathsf{Init}$ involves only the
finite set of agents present in size $n$, and each such agent has a bounded
freshness budget, every run in $\mathbb{S}(n)$ has what we call a \emph{``terminal point''}. At this point,  no honest agent can
perform further freshness-producing actions. The intruder may continue
to derive composite terms from already-known atoms, but no new atoms can be
introduced by honest parties once all budgets are exhausted.  Thus, although the set of admissible initial states is infinite, every run
evolves over a finite set of atoms and can therefore be represented finitely; this is shown formally in Lemma~\ref{lem:terminality}.

\paragraph{\textbf{Canonical atoms: a key consequence}}
\label{par:canonical-atoms}
We state here one consequence of the construction that is central to reading
all the maps in this section correctly, and which we use repeatedly below.
Because admissible initial states reuse atoms once a role's budget $k_X$ is
exhausted, and because no run mints atoms beyond those budgets, \emph{every
reachable state of every system, of any size, uses only the finitely many
canonical atoms allocated per role}. Consequently, the induced term map
$\hatmap{h}$ of the canonical collapse (Definition~\ref{cdf} below) acts on
\emph{agent identities and their indexing}, not by inventing new atoms: two
agents in the same canonical class already carry the same canonical atoms.
This is why, for two states related by the canonical order, collapsing never
changes the underlying atom set, and the direction of the maps is unambiguous:
$\hatmap{h}$ sends a \emph{larger} state (more agents) to a \emph{smaller} one
(fewer representative agents) over the \emph{same} atom set.

\paragraph{\textbf{Why Freshness Bounds Matter}}

The bounds $(k_X)_{X\in \Roles}$ induce a canonical ``folding'' of agents of each
role into $k_X$ representative classes.
These maps in turn define a preorder over system states, enabling our final results later, i.e., essentially, using  WSTS coverability~\cite{FinkelSchnoebelen01}, to prove parameterised secrecy in the setting of bounded freshness.

\subsection{A Canonical Agent Map \& Its Induced Well-Quasi-Order (WQO) over States}
 \label{sec:bos-wqo}

We now give a specific agent-folding map $h$, under which we prove our results.

\begin{definition}[Canonical good agent map] \label{cdf}
We fix per-role freshness bounds $(k_X)_{X\in \Roles}$. Then,
a \emph{canonical good agent map $h_c$} is:
\[
h_c(i\_X) \;:=\; (\,1 + (i-1 \bmod k_X)\,)_{X},\qquad h_c(I)=I.
\]
\end{definition}

This map is illustrated also on Figure~\ref{fig:bos-canonical-simple}. Such a map simply says that if a system/protocol-execution is large in its size $n$ (i.e., in  number of agents) so that the number $k_X$ of nonces allocated to role $X$ has been ``used'', then some agent $j> k_X+1$ of that role $X$ will start reusing previously allocated nonce atoms for that role in a canonical,
index-based manner.
That is to say, the initial states of that agent will be set as such via $\mathbb{S}(n).Init$.

Its induced term map $\hatmap{h_c}$ is as per Section~\ref{sec:bounds}, and its $\preceq_c$ relation on states\footnote{We have to prove that this induced $\preceq_c$ is a BOS, i.e., it preserves DY-derivability in the forward direction. This is done later in Lemma~\ref{lem:forward-dy}.} as per Definition~\ref{bosdefinition}.

\medskip

\underline{Note:} From here on, in the technical results that follow, we refer just to these canonical maps $h_c$ and  $\hatmap{h_c}$ and to their induced BOS order $\preceq$ on states. But, for simplicity of notation, we omit the ``$c$'', and we write simply: $h$, $\hat{h}$, $\preceq$.

\medskip

We now prove that the bound-based ordering $\preceq$ induced by the canonical good agent map in Def.~\ref{cdf} is a well-quasi-ordering
on the set of all reachable protocol states. We proceed in three steps:
(1) we prove the wqo property on admissible initial states;
(2) we show that every run can be simulated \emph{downwards}, onto a run from a minimal initial state;
(3) we lift the initial-state wqo to all reachable states using run simulation and a componentwise (Dickson) argument on agent counts.

\subsection{WQO on Initial States}

We start by introducing the notion of \emph{minimal initial states}, i.e., they contain no redundant agent instantiations modulo good agent maps. This is formalised in the definition below.

\begin{definition}[Minimal Initial States] \label{def:min-init}
A \emph{minimal initial state} is an admissible initial state that is minimal
with respect to the BOS preorder~$\preceq$, i.e.,  an initial state $s_0\in\Init$
such that there is no distinct $s_0'\in\Init$ with $s_0' \preceq s_0$.
\end{definition}

Now, let us prove that $\preceq$ when restricted to minimal inital states is a w.q.o.

\begin{lemma}[Well-quasi Ordered Initial States]
\label{lem:init-wqo} ~\\
Let $\Init := \bigcup_{n} \Sys(n).\Init$
be the set of all admissible initial states of systems of arbitrary size.
The restriction of $\preceq$ to $\Init$ is a well-quasi-ordering.
\end{lemma}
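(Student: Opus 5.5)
The plan is to reduce the order on initial states to a product order on finitely many natural-number counters, and then invoke Dickson's lemma. The key observation is the bounded-freshness construction of $\Sys(n).\Init$ together with the paragraph on canonical atoms. An admissible initial state has every agent at $\mathsf{step}=0$, so its only content is three things: the role-originated atoms of each agent, drawn from the finitely many canonical atoms allocated to that role (at most $k_X$ nonce-atoms per role $X$, plus the finitely many public/shared constants); the intruder's initial knowledge $K0_I$; and the counts $n_X$. I will therefore attach to each $s_0$ a \emph{type vector}. For each role $X$ and each possible local initial state $\ell$ over the canonical atom set, which ranges over a finite set $L_X$, let $c_{X,\ell}(s_0)$ be the number of agents $i_X$ with local state $\ell$. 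Because all states range over the canonical atom set, the relevant part of $K0_I$ lies in a finite set of possibilities: the intruder's atoms are finite, and the public information is fixed.

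The first key step is to show that the map $s_0 \mapsto (K0_I, (c_{X,\ell})_{X,\ell})$ reflects the order. If $K0_I = K0'_I$ and $c_{X,\ell}(s_0) \le c_{X,\ell}(s'_0)$ componentwise, with each positive count remaining positive, then $s_0 \preceq s'_0$. To prove this I will build the witness map $h$ in the style of Definition~\ref{bosdefinition}. Each agent $j$ of $s'_0$ with local state $\ell$ is sent to an agent of $s_0$ carrying the same $\ell$. Since $\hat h$ acts only on agent identities and never invents or identifies distinct atoms, we get $\hat h(s'_j) = s_i$ and $\hat h(K_I^{s'_0}) = K_I^{s_0}$. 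The map $h$ is role-respecting by construction. It is surjective because every $\ell$ that occurs in $s_0$ also occurs in $s'_0$, and I will allow non-injectivity by padding, or by treating the degenerate case of equal size as reflexivity.

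The second step uses the finiteness of the parameters. There are finitely many choices of $K0_I$ and finitely many supports $\{(X,\ell) : c_{X,\ell} > 0\}$. Given an infinite sequence of initial states, pigeonhole lets me extract a subsequence on which both of these are constant. On that subsequence I apply Dickson's lemma to the vectors in $\mathbb{N}^{\sum_X |L_X|}$, which yields indices $i<j$ with componentwise $\le$. The first step then gives $s_i \preceq s_j$. Reflexivity and transitivity of $\preceq$ on $\Init$ follow by composing agent maps, since $\widehat{h\circ h'} = \hat h \circ \hat{h'}$.

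I expect the main obstacle to be the first step. The ``surjective but not injective'' clause of Definition~\ref{def1map}, and the fact that $h$ is defined on all of $\Ag$, force some care: I must extend $h$ beyond the agents present, and I must argue that $\hat h$ really is identity on atoms here. For the latter I will lean on the canonical-atoms paragraph, which says that agents in one canonical class already share their atoms. A genuinely non-injective $h$ needs $s'_0$ to be strictly larger in at least one count. When the counts are equal, I will instead use a map that is non-injective outside the finitely many agents present, which the definition permits because $\Ag$ is infinite.
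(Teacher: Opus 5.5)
Your argument is correct at the level of rigour of the paper, but it takes a more elaborate route. The paper uses no Dickson step here. It maps each initial state to its canonical representative $c(s_0)$: apply $\hat h_c$ to all local components and to the intruder knowledge, and keep only the agents $1_X,\ldots,k_X$. It then observes that $c$ takes finitely many values, gets $c(s^p_0)=c(s^q_0)$ for some $p<q$ by pigeonhole, and uses $h_c$ itself as the BOS witness.

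This suffices because, under the canonical-atoms reading, clause~1 of Definition~\ref{bosdefinition} only asks that every local component of the smaller state occur somewhere in the larger one. The order is therefore blind to multiplicities. Your first pigeonhole (fixing $K0_I$ and the supports), together with your trick of making $h$ non-injective and surjective using the infinitely many agents outside $s'_0$, already gives $s_i\preceq s_j$; that is essentially the paper's proof. For the order as defined, the Dickson step on the vectors $(c_{X,\ell})$ is not needed.

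What the extra step buys is a witness that genuinely folds the agents of $s'_0$, class by class, onto those of $s_0$. That matters if the intruder's knowledge contains agent-indexed public data, or if BOS is read as ``$s'$ is obtained from $s$ by duplicating agents''. It is exactly the argument the paper itself uses later, in Steps~2--3 of Theorem~\ref{boswqo}.

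Your explicit treatment of surjectivity and non-injectivity on all of $\Ag$, and of reflexivity, is also more careful than the paper's. The image of $h_c$ is $\{1_X,\ldots,k_X\}$, so $h_c$ is not literally surjective on $\Ag$, a point the paper's proof passes over.

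One small correction: surjectivity of your $h$ onto the agents of $s_0$ comes from the count inequalities $c_{X,\ell}(s_0)\le c_{X,\ell}(s'_0)$. It does not follow merely from every $\ell$ in $s_0$ occurring in $s'_0$.
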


\begin{proof}
Let
\[
  (s^0_0,s^1_0,s^2_0,\ldots)
\]
be an arbitrary infinite sequence of states in \(Init\). We show that
there exist indices \(p<q\) such that \(s^p_0 \preceq s^q_0\), which by a characterisation in~\cite{FinkelSchnoebelen01} would prove $\preceq$ (over $Init$) to be a w.q.o.

Fix a role \(X\). By the bounded-freshness assumption, every
initial state uses at most \(k_X\) distinct nonce-atoms originating
from agents of role \(X\). Moreover, all admissible initial states are
constructed so that agents of role \(X\) whose indices are congruent
modulo \(k_X\) reuse the same role-originated atoms. Hence, after
applying the canonical map \(h_c\), the local initial components of
all agents of role \(X\) are represented by at most \(k_X\) canonical
components.

Let $c(s_0)$
denote the canonical representative of an initial state \(s_0\), obtained
by applying \(\hat h_c\) to all local components and to the intruder
knowledge, and by retaining only the agents
\[
  1_X,\ldots,k_X
\]
for each role \(X\).

We first observe that the set of possible representatives \(c(s_0)\) is
finite. Indeed, the set of roles is finite; for each role \(X\), there are
only \(k_X\) representative agents; each representative agent is at
control location \(0\) in an initial state; and the set of role-originated
atoms available in representatives is bounded by the freshness bounds
\((k_X)_{X\in Roles}\). Since role descriptions are finite and only
well-typed instantiations are allowed, there are only finitely many
possible initial local components and finitely many possible initial
intruder components, up to the canonical collapse \(\hat h_c\).

Thus the infinite sequence
\[
  c(s^0_0),c(s^1_0),c(s^2_0),\ldots
\]
takes values in a finite set. Therefore, by the pigeonhole principle,
there exist \(p<q\) such that
\[
  c(s^p_0)=c(s^q_0).
\]

We now show that this implies \(s^p_0\preceq s^q_0\). Since
\(c(s^p_0)=c(s^q_0)\), every honest local component of \(s^p_0\)
appears as the \(\hat h_c\)-image of some honest local component of
\(s^q_0\) of the same role. Moreover, the intruder components satisfy
\[
  \hat h_c(K^{s^q_0}_I)=K^{s^p_0}_I .
\]
Hence the two clauses in the definition of the BOS order are satisfied
with witness \(h_c\). Therefore,
\[
  s^p_0\preceq s^q_0 .
\]

Since every infinite sequence in \(Init\) contains such an increasing
pair~\cite{FinkelSchnoebelen01}, the restriction of \(\preceq\) to \(Init\) is a well-quasi-ordering.
\end{proof}

\subsection{Downward run simulation onto minimal initial states.}

We next show that any run of an arbitrarily large system projects, under the
canonical collapse, onto a run of the minimal system obtained by folding its
agents.

\begin{lemma}[Downward run simulation from minimal initial states]
\label{lem:min-init-sim}
For every initial state $s_0 \in \Init$, its canonical collapse
$m_0 := \hatmap{h}(s_0)$ is a $\preceq$-minimal initial state with
$m_0 \preceq s_0$.
Moreover, for every finite run
\[
s_0 \xrightarrow{\pi} s,
\]
there exists a run
\[
m_0 \xrightarrow{\widetilde{h}(\pi)} s'
\qquad\text{with}\qquad
s' = \hatmap{h}(s),
\]
and hence $s' \preceq s$.
\end{lemma}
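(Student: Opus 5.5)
I would split the lemma into its two claims and prove them in order: first the static statement about $m_0 := \hatmap{h}(s_0)$, then the dynamic simulation by induction on the length of $\pi$.

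\textbf{The initial collapse.} Here $m_0$ is the state over agents $1_X,\ldots,k_X$ for each role $X$. Each representative $j_X$ gets the local component $\hatmap{h}(s_{0,i_X})$ of any preimage $i_X$. This choice is independent of the preimage, by the canonical-atoms observation of Section~\ref{par:canonical-atoms}: agents congruent modulo $k_X$ carry the same role-originated atoms, so their collapsed initial components coincide. The intruder component of $m_0$ is $\hatmap{h}(K0_I)$. Admissibility of $m_0$ holds because it respects every freshness budget $k_X$ and consists of well-typed instantiations, since $\hatmap{h}$ preserves typing. The relation $m_0 \preceq s_0$ is then immediate from Definition~\ref{bosdefinition}, with witness $h$. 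For minimality, I would argue that any $s_0' \preceq m_0$ in $\Init$ must itself keep one agent per canonical class. Otherwise some role-originated atom of $m_0$ would have no preimage, contradicting surjectivity in Definition~\ref{def1map}. Hence $s_0'$ equals $m_0$ up to the role-respecting renaming, which is the minimality required by Definition~\ref{def:min-init}.

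\textbf{The simulation.} I would prove the stronger invariant that for every prefix $\pi$ with $s_0 \xrightarrow{\pi} s$, the run $m_0 \xrightarrow{\widetilde{h}(\pi)} \hatmap{h}(s)$ exists. Here $\hatmap{h}(s)$ denotes the state whose representative $j_X$ carries the collapsed local state of the most advanced preimage, and whose intruder knowledge is $\hatmap{h}(K_I^s)$. For the inductive step $s \xrightarrow{\alpha} t$, I would case-split on the kind of action.

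\emph{Honest send $i_A!\cdot:t$.} The agent $i_A$ can compose $t$ from $K_{i_A}$, so $h(i_A)$ can compose $\hatmap{h}(t)$ from $\hatmap{h}(K_{i_A})$, because composition commutes with the homomorphic map $\hatmap{h}$. The intruder knowledge then becomes $\hatmap{h}(K_I \cup \{t\})$.

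\emph{Intruder-mediated send.} Here $t$ comes from $\know(K_I)$ under a well-typed substitution. I would invoke forward preservation of derivability (Lemma~\ref{lem:forward-dy}): $t \in \know(K)$ implies $\hatmap{h}(t) \in \know(\hatmap{h}(K))$, obtained by mapping each Dolev--Yao rule instance through $\hatmap{h}$. Composing a well-typed substitution with $\hatmap{h}$ stays well-typed.

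\emph{Receive.} Match-reception is preserved, since equalities between subterms survive $\hatmap{h}$.

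\textbf{Main obstacle.} The hard step is the stuttering case. Two distinct agents $i, i'$ with $h(i) = h(i')$ may each perform the same step, so the collapsed run must perform it only once. Definition~\ref{def:action-lift} drops the duplicate when the two actions are adjacent, but in general they may be interleaved with other actions. I would handle this in two stages. First, I would establish that the second occurrence is redundant: because both preimages hold identical canonical atoms, the collapsed local state and the collapsed intruder knowledge after it are already those produced by the first occurrence, so deleting it keeps the invariant. Second, I would show that non-adjacent duplicates leave the invariant intact: the representative simply does not re-execute a step it has already passed, and the corresponding state update is idempotent. I expect a complete argument to need an intermediate lemma stating that, in the collapsed system, the lagging preimage's actions are enabled no-ops. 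Once the invariant holds, $s' = \hatmap{h}(s)$, and $s' \preceq s$ follows as in the first step, with witness $h$.
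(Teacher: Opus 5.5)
Your route is the paper's: collapse $s_0$, induct on the length of $\pi$, handle sends and receives through Lemma~\ref{lem:forward-dy}, and absorb duplicate agent/step images by stuttering. The gap is the step you flag as the main obstacle, and both claims you propose for it fail in the paper's semantics.

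Sharing canonical atoms does not make a sibling's action redundant, because the intruder chooses independently what each agent receives. Take a one-role protocol $B$: (1) receive a nonce $x$; (2) send $\encs{x}{\kappa}$, with $\kappa\in\Secret$ and $k_B=1$. Every $B$-agent then collapses onto $r=1_B$, the only $B$-agent of $m_0$. Let the intruder deliver two of its own nonces $n_1\neq n_2$ to $1_B$ and $2_B$; these are fixed by $\hatmap{h}$, since $h(I)=I$. Let both agents reply. Then $\hatmap{h}(K_I^s)$ contains both $\encs{n_1}{\kappa}$ and $\encs{n_2}{\kappa}$. In any run from $m_0$, however, only $r$ can emit a term under $\kappa$, and it executes step (2) once. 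So $\hatmap{h}(s)$ is unreachable from $m_0$. Two schedules make this concrete.
\begin{itemize}
\item For $1_B?\cdot:n_1$, $2_B?\cdot:n_2$, $1_B!\cdot:\encs{n_1}{\kappa}$, $2_B!\cdot:\encs{n_2}{\kappa}$, the stuttering rule merges both pairs and the collapsed run yields only $\encs{n_1}{\kappa}$. So your ``state update is idempotent'' claim is false.
\item For $1_B?\cdot:n_1$, $1_B!\cdot:\encs{n_1}{\kappa}$, $2_B?\cdot:n_2$, $2_B!\cdot:\encs{n_2}{\kappa}$, nothing is merged, because $\widetilde{h}$ (Definition~\ref{def:run-lift}) deletes only \emph{consecutive} duplicates. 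Then $\widetilde{h}(\pi)$ asks $r$ to redo step (1) after finishing. That action is not enabled, since actions are enabled only at the agent's current step.
\end{itemize}
There are therefore no ``enabled no-ops'' to appeal to. Your ``most advanced preimage'' reading of $\hatmap{h}(s)$ also breaks your own send case. That case needs $h(i)$ to carry $\hatmap{h}$ of $i$'s local state, which fails when $i$ lags behind, or ties with, a differently informed sibling.

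So the intermediate lemma you anticipate is not merely unproved. With the equality $s'=\hatmap{h}(s)$ it cannot hold once a single representative, running its script once, must stand in for several independently driven sessions. Fixing this requires changing the model (e.g.\ letting a representative run a separate copy of its script for each preimage it represents), not a finer case analysis. The paper's proof does not supply the missing piece either. It follows your route, disposes of stuttering by asserting that the reached state is unchanged (refuted by the first schedule above), and never addresses non-adjacent duplicates. You have in fact located the difficulty more precisely than the paper does.

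Separately, your minimality argument misuses surjectivity. Definition~\ref{def1map} asks $h$ to be onto $\Ag$, which says nothing about which agents of $m_0$ are matched. Clause~1 of Definition~\ref{bosdefinition} constrains only the agents of the smaller state. So nothing forces an $s_0'\preceq m_0$ to keep one agent per class.
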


\begin{proof}
Fix $s_0 \in \Init$ and let $m_0:=\hat h(s_0)$ be its canonical collapse under
$h=h_c$. By construction $m_0$ retains at most $k_X$ agents per role $X$, applying
$h_c$ to it is the identity, so $m_0$ is $\preceq$-minimal, and
$\hat h(s_0)=m_0$ gives $m_0\preceq s_0$ directly from Definition~\ref{bosdefinition}.

\paragraph{Downward simulation of the run.}
We proceed by induction on the length of
$\pi:\; s_0 \xrightarrow{a_1} s_1 \xrightarrow{a_2}\cdots\xrightarrow{a_k} s_k=s$,
showing that the image sequence $\widetilde h(\pi)$ is executable from $m_0$ and
reaches exactly $\hat h(s)$. Recall (paragraph~``Canonical atoms'',
Section~\ref{par:canonical-atoms}) that all reachable states range over the
canonical atoms, so $\hat h$ relabels agent indices without identifying distinct
atoms.

\emph{Base case} ($k=0$): the empty run reaches $m_0=\hat h(s_0)$.

\emph{Inductive step.} Assume the image of $s_0\xrightarrow{\pi'}s_{k-1}$ is a run
$m_0\xrightarrow{\widetilde h(\pi')}\hat h(s_{k-1})$. Consider the last action
$a_k$.

\underline{Honest send $i!\cdot:t$.} Enabledness at $s_{k-1}$ means the sender can
compose $t$ from its local state, i.e.\ $K_i^{s_{k-1}}\vdash_{DY} t$. By
Lemma~\ref{lem:forward-dy}, $\hat h(K_i^{s_{k-1}})\vdash_{DY}\hat h(t)$, i.e.\ the
agent $h(i)$ in $\hat h(s_{k-1})$ can compose $\hat h(t)$, so
$\widetilde h(a_k)=h(i)!\cdot:\hat h(t)$ is enabled. Executing it adds $\hat h(t)$
to the intruder knowledge and advances $h(i)$'s step, which is exactly the image
under $\hat h$ of the update at $s_k$. Hence the reached state is $\hat h(s_k)$.

\underline{Honest receive $i?\cdot:t$.} Enabledness means $K_I^{s_{k-1}}\vdash_{DY} t$.
By Lemma~\ref{lem:forward-dy}, $\hat h(K_I^{s_{k-1}})\vdash_{DY}\hat h(t)$, so the
mapped receive by $h(i)$ of $\hat h(t)$ is enabled from $\hat h(s_{k-1})$, and its
update is the $\hat h$-image of the update at $s_k$; the reached state is
$\hat h(s_k)$.

If $\widetilde h$ merges $a_k$ with the previous action by the stuttering rule
(same agent/step image), the corresponding step is already present and no new
transition is needed; the reached state is unchanged and still equals
$\hat h(s_k)$.

Thus $m_0\xrightarrow{\widetilde h(\pi)}\hat h(s)=:s'$. Finally, taking $s'':=s$
and the same $h$ as witness, $\hat h(s)=s'$ gives $s'\preceq s$ by
Definition~\ref{bosdefinition}.
\end{proof}

~

\noindent
\emph{Note.} In the above, the endpoint $s'$ reached from the ``smaller``,
minimal $m_0$ is the \emph{collapse} $\hatmap{h}(s)$ of the original endpoint
$s$, and it therefore sits \emph{below} $s$ in the order ($s'\preceq s$). 
I.e.,  collapsing agents can  lose distinctions.

A direct consequence of Lemma~\ref{lem:min-init-sim} above is the following lemma:

\begin{lemma}[Reachability from minimal initial states]
\label{reachfrommin}
For every reachable state $s$, there exists a $\preceq$-minimal initial state
$m_0 \in \Init$ and a finite run
\[
m_0 \xrightarrow{\pi'} s'
\]
such that $s' \preceq s$ and $s'=\hatmap{h}(s)$.
\end{lemma}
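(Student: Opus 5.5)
The plan is to derive the statement directly from Lemma~\ref{lem:min-init-sim}, after unfolding what ``reachable state'' means in the paper's semantics. By the definition of systems of fixed size in Section~\ref{actionssec}, a state $s$ is reachable if it lies in some $\mathbb{S}(n)$. Such a state is the endpoint of a run starting from an admissible initial state. So we fix a size $n$, an initial state $s_0 \in \mathbb{S}(n).\Init \subseteq \Init$, and a finite action sequence $\pi$ with $s_0 \xrightarrow{\pi} s$. Finiteness of $\pi$ follows because reachability is witnessed by a finite prefix of a run, even though runs may be infinite.

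Next, we apply Lemma~\ref{lem:min-init-sim} to $s_0$ and $\pi$. Its first part gives that $m_0 := \hatmap{h}(s_0)$ is a $\preceq$-minimal initial state in $\Init$ with $m_0 \preceq s_0$. Its second part gives a run $m_0 \xrightarrow{\widetilde{h}(\pi)} s'$ with $s' = \hatmap{h}(s)$, and hence $s' \preceq s$. Setting $\pi' := \widetilde{h}(\pi)$ yields exactly the required witnesses: the minimal initial state $m_0$, the finite run $\pi'$, and the state $s'$. The run $\pi'$ is finite because $\widetilde{h}$ maps finite action sequences to finite ones, by Definition~\ref{def:run-lift}; the stuttering collapse can only shorten them.

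The relation $s' \preceq s$ can also be checked directly from Definition~\ref{bosdefinition} with witness $h = h_c$. Each honest agent of $s'$ is some representative $h(i)$, and its local state is by construction $\hatmap{h}(s_i)$. The intruder component of $s'$ is $\hatmap{h}(K_I^{s})$. Both clauses of the BOS definition are therefore met.

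The main obstacle is only a bookkeeping point. We must confirm that $m_0$ really belongs to $\Init$, i.e.\ that it is an admissible initial state of the smaller system $\mathbb{S}(n')$ with $n'_X = \min(n_X, k_X)$, and not merely a formal collapse. I would argue this from the canonical-atoms paragraph (Section~\ref{par:canonical-atoms}). Agents in the same canonical class already carry the same role-originated atoms, so keeping only $1_X,\ldots,k_X$ gives a state generated subject to the same freshness bounds. This state has step $0$ everywhere and well-typed instantiations. Apart from this check, the statement is an immediate corollary.
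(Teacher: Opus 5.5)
Your proposal is correct and matches the paper's proof: unfold reachability to a finite run $s_0 \xrightarrow{\pi} s$ with $s_0 \in \Init$, then apply Lemma~\ref{lem:min-init-sim} with $m_0 = \hat h(s_0)$ and $\pi' := \widetilde{h}(\pi)$. Your separate check that $m_0$ is admissible is not needed, because Lemma~\ref{lem:min-init-sim} already asserts that $m_0$ is a $\preceq$-minimal initial state.
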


\begin{proof}
Let $s$ be reachable via $s_0\xrightarrow{\pi}s$ with $s_0\in\Init$. Apply
Lemma~\ref{lem:min-init-sim} to $s_0$ and $\pi$: with $m_0=\hat h(s_0)$ we obtain a
run $m_0\xrightarrow{\widetilde h(\pi)}s'$ with $s'=\hat h(s)\preceq s$.
\end{proof}

\subsection{The BOS is a Well Quasi Order}

And now, we have the final lifting step, showing that our agent-map and induced term-map leads to an w.q.o. over all reachable states:

\begin{theorem}[BOS is a well-quasi-order]
\label{boswqo}
The bound-based ordering $\preceq$ is a well-quasi-ordering on the set of all
reachable states.
\end{theorem}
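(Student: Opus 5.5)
We mirror the pigeonhole argument of Lemma~\ref{lem:init-wqo}, lifted from initial states to all reachable states. Given an infinite sequence $s^0,s^1,s^2,\ldots$ of reachable states, we produce indices $p<q$ with $s^p\preceq s^q$, witnessed by the canonical map $h=h_c$. The central object is the canonical collapse $\hat h(s)$. By Lemma~\ref{reachfrommin}, every reachable $s$ satisfies $\hat h(s)\preceq s$, and $\hat h(s)$ is reached from a $\preceq$-minimal initial state $m_0$. So it suffices to show two things. First, the collapses $\hat h(s^j)$ range over a set on which we can extract a good pair. Second, equal collapses $\hat h(s^p)=\hat h(s^q)$ give $s^p\preceq s^q$.

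\textbf{Step 1: the collapsed states form a finite set.} A collapsed state has at most $k_X$ agents per role $X$, and each agent's step ranges over the finite role script. By the paragraph on canonical atoms (Section~\ref{par:canonical-atoms}), all terms are built over the finite canonical atom set. Under well-typed substitutions, role variables can only be instantiated at fixed types. Honest local knowledge is therefore drawn from the finitely many typed instances of subterms of the role messages over canonical atoms, so there are finitely many honest local components.

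\textbf{Step 2: the intruder component, which we expect to be the main obstacle.} The set $K_I$ itself may be infinite or grow unboundedly, since the intruder can synthesise arbitrarily large terms. We would handle this by comparing intruder knowledge only through the terms actually placed on the network. Every element of $K_I$ beyond $K0_I$ is a well-typed honest send, hence a typed instance of a role message over canonical atoms. These form a finite set, so the raw knowledge sets (before DY closure) range over a finite powerset. We then apply the pigeonhole principle to the pair (honest components, raw intruder knowledge) of $\hat h(s^j)$.

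If this finiteness fails because the raw knowledge sets are unbounded, the fallback is different. We would instead order intruder knowledge by inclusion of the finitely many canonical atoms derivable in $\know(\cdot)$, and count the multiplicities of agents in each canonical class and control state as a vector in $\mathbb{N}^d$. We would then combine Dickson's lemma on these vectors with a pigeonhole on the finite data, which is the componentwise argument announced before Lemma~\ref{lem:init-wqo}. The difficulty is that Definition~\ref{bosdefinition} demands \emph{equality} $\hat h(K_I^{s'})=K_I^s$ rather than inclusion. Consequently the finite-set pigeonhole route is the one we would commit to, and Dickson is only used on agent counts, which the existential clause~1 of Definition~\ref{bosdefinition} tolerates.

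\textbf{Step 3: equal collapses give the order.} Once $\hat h(s^p)=\hat h(s^q)$, we check the two clauses of Definition~\ref{bosdefinition} exactly as in the last paragraph of the proof of Lemma~\ref{lem:init-wqo}. For clause~1, each honest component of $s^p$ is, after collapse, the image of a component of $s^q$ of the same role. For clause~2, the intruder knowledge sets agree. One subtlety must be handled: clause~1 asks $\hat h(s'_j)=s_i$ for components $s_i$ of $s^p$ itself, not of its collapse. We resolve this using the canonical-atoms remark, namely that agents in one canonical class carry identical canonical atoms. As a result $\hat h$ acts as the identity on local knowledge, and only agent indices change. Hence $s^p\preceq s^q$, and since the sequence was arbitrary, $\preceq$ is a wqo on reachable states.
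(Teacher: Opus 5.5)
There is a genuine gap in Step~3, at the very subtlety you flag, and your resolution does not close it. The canonical-atoms remark says that agents of one class share their \emph{atoms}, so $\hat h$ is the identity on each local component. It does not say that agents of one class share the same local component $(\mathsf{step}_i,K_i)$.

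Agents move independently. In a reachable $s^p$, the representative $1_X$ can still be at step $0$ while $(k_X{+}1)_X$ has performed a receive. That receive changes its local state and leaves $K_I$ untouched. Now let $s^q$ have the same send history, except that all its agents of role $X$ are still at step $0$. Your invariant (the collapse keeping at most $k_X$ agents per role, plus intruder knowledge) takes the same value on $s^p$ and $s^q$. Yet no agent of $s^q$ carries the component of $(k_X{+}1)_X$ in $s^p$, so clause~1 of Definition~\ref{bosdefinition} fails for every witness map, not only for $h_c$.

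The $\le k_X$-agent collapse discards the components of non-representative agents, so equal collapses cannot yield clause~1. Knowing that $\hat h$ acts as the identity on each component is beside the point. What clause~1 needs is that every component of $s^p$ actually \emph{occurs} in $s^q$. The paper's own Step~3 makes the same tacit assumption when it speaks of ``the same canonical local component'' for each class.

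The repair is cheap and keeps your route. Pigeonhole not on $\hat h(s)$ but on the pair formed by
\begin{itemize}
\item the per-role \emph{set} of local components actually occurring in $s$, and
\item the raw intruder knowledge.
\end{itemize}
Your Step~1 argument shows that individual local components range over a finite set, and Step~2 shows the same for raw intruder knowledge. So this pair takes finitely many values. Equality at some $p<q$ then gives, for every agent $i$ of $s^p$, an agent $j$ of $s^q$ of the same role with $\hat h_c(s^q_j)=s^q_j=s^p_i$. It also gives $\hat h_c(K_I^{s^q})=K_I^{s^p}$.

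Compared with the paper: the paper applies Dickson to per-role agent counts and builds a fresh map $g$ folding the agents of $s_q$ onto those of $s_p$. You instead observe that clause~1 is existential, so agent counts are irrelevant. That reading is correct for Definition~\ref{bosdefinition} as written, and it reduces the proof to a pure pigeonhole, exactly as in Lemma~\ref{lem:init-wqo}. What the paper's Dickson step buys is only a witness mapping the agents of the larger state onto those of the smaller one, matching the informal ``duplicating agents'' reading of $\preceq$. The definition does not demand this.
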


\begin{proof}
Let $(s_i)_{i\in\mathbb{N}}$ be an infinite sequence of reachable states. We must
find $p<q$ with $s_p\preceq s_q$.

\paragraph{Step 1: a finite invariant.}
By Section~\ref{par:canonical-atoms}, every reachable state uses only the finite
set $A$ of canonical atoms, and honest agents carry role messages of bounded
depth. Hence the canonical collapse $c(s):=\hat h(s)$, which keeps at most $k_X$
representative agents per role and an intruder-knowledge component over $A$,
ranges over a \emph{finite} set of values (finitely many control locations,
finitely many canonical local components, and finitely many knowledge sets over
the bounded-depth terms on $A$). Therefore there is an infinite subsequence
$(s_{i_j})_j$ on which $c(s_{i_j})$ is a single fixed value $c$.

\paragraph{Step 2: Dickson on agent counts.}
Along this subsequence, associate with each $s_{i_j}$ the vector
$v_j:=(\#\{\text{agents of role }X\text{ in }s_{i_j}\})_{X\in\Roles}\in\mathbb{N}^{|\Roles|}$.
By Dickson's lemma\footnote{Dickson's lemma~\cite{Dickson1913}: for any fixed $k$, the set
$\mathbb{N}^{k}$ ordered componentwise is a well-quasi-order (equivalently, every
subset of $\mathbb{N}^{k}$ has finitely many minimal elements, and there is no
infinite antichain). We use the case $k=|\Roles|$; it is also the special case,
for finitely many copies of $(\mathbb{N},\le)$, of the fact that a finite product
of well-quasi-orders is a well-quasi-order~\cite{FinkelSchnoebelen01}.}~\cite{Dickson1913}, $\mathbb{N}^{|\Roles|}$ under the componentwise order is a wqo,
so there are $j<j'$ with $v_j\le v_{j'}$ componentwise.

\paragraph{Step 3: exhibit the BOS witness.}
Set $p:=i_j$, $q:=i_{j'}$. Both $s_p$ and $s_q$ collapse to the same $c$, so they
carry the same intruder knowledge over $A$ and, for each canonical class, the
same canonical local component; and $s_q$ has at least as many agents of each
role as $s_p$. Define a good agent map $g$ that sends the agents of $s_q$ onto
the agents of $s_p$ class by class (surjectively, folding the surplus agents of
$s_q$ onto representatives, which exist since $v_p\le v_q$). Because the atoms are
canonical, $\hat g$ identifies no distinct atoms, so $\hat g(s_q{}_a)$ matches the
corresponding local component of $s_p$ for each honest agent, and
$\hat g(K_I^{s_q})=K_I^{s_p}$. Both clauses of Definition~\ref{bosdefinition} hold,
hence $s_p\preceq s_q$.
\end{proof}

%% file: concreteFig.tex
\begin{figure*}[!t]
\centering
\scalebox{1}{
\begin{tikzpicture}[
  font=\small,
  arrow/.style={-{Latex[length=2.2mm]}, thick},
  node distance=10mm and 22mm,
  agent/.style={circle, draw, inner sep=1.2pt, minimum size=4.2mm},
  lbl/.style={inner sep=1pt, fill=white},
  frame/.style={draw, rounded corners, inner sep=6pt}
]

\node[frame, minimum width=4.2cm, minimum height=3.2cm] (s)  {};
\node[frame, minimum width=5.3cm, minimum height=3.2cm, right=3.0cm of s] (sp) {};

\node[anchor=north] at (s.north) {\textbf{~~~~~~~~state $s$}};
\node[anchor=north] at (sp.north) {\textbf{~~~~~~~~~~~~~state $s'$}};

\node[agent] (i1) at ($(s.center)+(-0.9,0.9)$) {};
\node[agent] (i2) at ($(s.center)+(0.0,0.0)$) {};
\node[agent] (i3) at ($(s.center)+(-0.9,-0.9)$) {};

\node[anchor=north] at ($(i1.south)+(0,-1pt)$) {$i_1$};
\node[anchor=north] at ($(i2.south)+(0,-1pt)$) {$i_2$};
\node[anchor=north] at ($(i3.south)+(0,-1pt)$) {$i_3$};

\node[agent] (j1) at ($(sp.center)+(-1.6,1.2)$) {};
\node[agent] (j2) at ($(sp.center)+(-0.6,1.2)$) {};

\node[agent] (j3) at ($(sp.center)+(-1.6,0.2)$) {};
\node[agent] (j4) at ($(sp.center)+(-0.6,0.2)$) {};

\node[agent] (j5) at ($(sp.center)+(-1.6,-0.8)$) {};
\node[agent] (j6) at ($(sp.center)+(-0.6,-0.8)$) {};

\node[anchor=north] at ($(j1.south)+(0,-1pt)$) {$j_1$};
\node[anchor=north] at ($(j2.south)+(0,-1pt)$) {$j_2$};
\node[anchor=north] at ($(j3.south)+(0,-1pt)$) {$j_3$};
\node[anchor=north] at ($(j4.south)+(0,-1pt)$) {$j_4$};
\node[anchor=north] at ($(j5.south)+(0,-1pt)$) {$j_5$};
\node[anchor=north] at ($(j6.south)+(0,-1pt)$) {$j_6$};

\node[anchor=south west] at ($(sp.south west)+(70pt,0pt)$) {\footnotesize more agents};

\node[anchor=south west] at ($(s.south west)+(63pt,0pt)$) {\footnotesize fewer agents};

\draw[arrow] (j1) -- (i1);
\draw[arrow] (j4) -- (i1);
\draw[arrow] (j2) -- (i2);
\draw[arrow] (j5) -- (i2);
\draw[arrow] (j3) -- (i3);
\draw[arrow] (j6) -- (i3);

\node[align=center, text width=4.6cm, anchor=south] (hc)
  at ($(s.north east)!0.5!(sp.north west)+(0,6mm)$) {
  \textbf{BoS witness $h$ or the canonical collapse $h_c$}\\
  $\displaystyle h_c(i_X)=\bigl(1+((i-1)\bmod k_X)\bigr)_X$\\[-1pt]
  $\displaystyle h_c(I)=I$
};

\draw[arrow]   ($(sp.north west)+(-0.1,0.35)$) -- ($(s.north east)+(0.1,0.35)$)
  node[midway, above, lbl] {$s \preceq s'$};

\end{tikzpicture}
}
\caption{BoS via the (canonical) good agent map}
\label{fig:bos-canonical-simple}
\end{figure*}

%% file: wsts_result2.tex
Having established in Section~\ref{sec:wqo} that the BOS relation $\preceq$ is a
well-quasi-order on reachable states and that executions are simulable modulo
$\preceq$, we now build our parameterised-secrecy decidability argument.

The structure of the section is as follows. We first isolate the one direction
of preservation of DY derivability that is valid under a general collapsing map. We then use
this to prove upward compatibility of the transition relation with the BOS
ordering. After that, we identify the finiteness phenomenon induced by bounded
freshness via terminal states and finite atom bases. This makes it possible to
replace the ``raw leakage'' in Section~\ref{sec:leakage}   by a stronger notion of ``bad state'' that is
stable under collapsing, and finally to obtain decidability, in two complementary
ways: a self-contained cut-off argument and a standard WSTS coverability argument.

\subsection{Forward Preservation of Dolev--Yao Derivability} \label{dy-fw}

Before reasoning about states and transitions, we record the basic fact that our
$\hat h$-collapsing preserves DY derivability in the ``forward direction''; this is as follows:

\begin{lemma}[Forward preservation of DY derivability]
\label{lem:forward-dy}
For every $K \subseteq T$ and every $t \in T$,
\[
K \vdash_{DY} t \;\Longrightarrow\; \hat h(K) \vdash_{DY} \hat h(t).
\]
\end{lemma}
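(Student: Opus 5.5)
The plan is a structural induction on DY derivations: I will show that every instance of every rule in the inference system of Section~\ref{app:dy} is sent by $\hat h$ to an instance of the same rule. The key fact is that $\hat h$ is a homomorphism on terms. By Definition~\ref{def2map}, $\hat h$ rewrites only agent-indexed atoms and commutes with every constructor of $\Sigma$:
\[
\hat h(\pair{u}{v})=\pair{\hat h(u)}{\hat h(v)}, \qquad
\hat h(\encs{u}{k})=\encs{\hat h(u)}{\hat h(k)}, \qquad
\hat h(\hash{u})=\hash{\hat h(u)},
\]
and similarly $\hat h(\encpk{u}{\pk{K}})=\encpk{\hat h(u)}{\pk{\hat h(K)}}$. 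The first step is to record these equations, together with the fact that $\hat h$ preserves types.

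Next I fix a derivation tree of $K\vdashDY t$. By Lemma~\ref{cfp} I may take it cut-free, although any derivation will do. I then prove by induction on its height that $\hat h(K)\vdashDY \hat h(u)$ for every node label $u$.

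For \textsc{Ax}: $u\in K$ gives $\hat h(u)\in\hat h(K)$.

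For the synthesis rules \textsc{Pair}, \textsc{SEnc}, \textsc{PKEnc} and \textsc{Hash}: the induction hypothesis supplies derivations of the images of the premises, and homomorphism shows that the image of the conclusion is exactly what the same rule produces.

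For the analysis rules \textsc{Fst}, \textsc{Snd} and \textsc{SDec}: homomorphism shows that the image of the major premise has the same shape, so the same rule applies.

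Equivalently, one can argue with the operators $\synth$ and $\analz$. I would show $\hat h(\synth(X))\subseteq\synth(\hat h(X))$ and $\hat h(\analz(X))\subseteq\analz(\hat h(X))$, and then pass to the least fixed point $\know$ by induction on the approximants.

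The only delicate case, and the step I expect to be the main obstacle, is \textsc{PKDec}. There the side premise is the inverse key $K^{-1}$, which is a term linked to the key $K$ inside $\pk{K}$ rather than a syntactic subterm of it. I need $\hat h(K^{-1})=\hat h(K)^{-1}$, so that the collapsed private key still opens the collapsed ciphertext. I would justify this from the definition of $\hat h$: it replaces \emph{all} atoms pertaining to agent $i$ uniformly by the corresponding atoms pertaining to $h(i)$, and a key pair of an agent consists of atoms pertaining to that same agent. Hence the pair $(K,K^{-1})$ is sent to the pair $(\hat h(K),\hat h(K)^{-1})$ belonging to $h(i)$. Public constants and intruder atoms are fixed, since $h(I)=I$.

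Finally, I would remark that only well-typed substitutions are involved, and that $\hat h$ preserves $\tau$, so every image term is still a legitimate message. Note that the converse implication is not claimed; it can fail for a general good map that identifies distinct atoms. This is why only the forward direction is stated.
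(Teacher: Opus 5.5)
Your proposal is correct and follows essentially the same route as the paper: an induction on the DY derivation, rule by rule, using that $\hat h$ is a homomorphism on the constructors of $\Sigma$. Where you go beyond the paper is the \textsc{PKDec} case, where you explicitly check $\hat h(K^{-1}) = \hat h(K)^{-1}$; the paper only calls the public-key case ``identical in structure,'' so you supply a step it leaves implicit.
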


\begin{proof}
We argue by induction on a derivation of $K \vdash_{DY} t$ in the DY system of Section~\ref{app:dy}.

\medskip
\noindent \emph{Axiom case.}
If $t \in K$, then $\hat h(t) \in \hat h(K)$, and hence
$\hat h(K) \vdash_{DY} \hat h(t)$ by \textsc{Ax}.

\medskip
\noindent \emph{Pairing case.}
Suppose the last rule is
\[
\frac{K \vdash_{DY} u \qquad K \vdash_{DY} v}{K \vdash_{DY} \langle u,v\rangle}.
\]
By induction hypothesis,
$\hat h(K) \vdash_{DY} \hat h(u)$ and
$\hat h(K) \vdash_{DY} \hat h(v)$.
Since $\hat h$ is homomorphic on pairing,
$\hat h(\langle u,v\rangle)=\langle \hat h(u),\hat h(v)\rangle$,
and therefore
$\hat h(K) \vdash_{DY} \hat h(\langle u,v\rangle)$ by \textsc{Pair}.

\medskip
\noindent \emph{Projection cases.}
Suppose the last rule is \textsc{Fst}; the \textsc{Snd} case is analogous.
By induction hypothesis,
$\hat h(K) \vdash_{DY} \langle \hat h(u),\hat h(v)\rangle$,
hence
$\hat h(K) \vdash_{DY} \hat h(u)$ by \textsc{Fst}.

\medskip
\noindent \emph{Symmetric encryption/decryption cases.}
For \textsc{SEnc}, from $\hat h(K)\vdash_{DY}\hat h(u)$ and $\hat h(K)\vdash_{DY}\hat h(k)$
we get $\hat h(K)\vdash_{DY}\mathsf{enc}_s(\hat h(u),\hat h(k))=\hat h(\mathsf{enc}_s(u,k))$.
For \textsc{SDec}, from $\hat h(K)\vdash_{DY}\mathsf{enc}_s(\hat h(u),\hat h(k))$ and
$\hat h(K)\vdash_{DY}\hat h(k)$ we get $\hat h(K)\vdash_{DY}\hat h(u)$.

\medskip
\noindent \emph{Public-key and hashing cases.}
Identical in structure, using homomorphicity of $\hat h$ on the public-key
constructors and the hash constructor, and the corresponding DY rules.

\medskip
This exhausts the rules of the DY calculus, and the lemma follows.
\end{proof}

\noindent For a
general good map the converse can fail, since $\hat h$ may identify distinct
atoms; for the canonical map committed to in Section~\ref{sec:wqo}, no distinct
atoms are identified (Section~\ref{par:canonical-atoms}), so the two directions
coincide.

\subsection{Upward Compatibility}

The next step is to connect the term-level preservation result above with the
operational semantics. The point of the following theorem is this: \\ ~if a ``smaller''
state (i.e., with fewer, collapsed agents) can perform one step, then any ``larger'' state found above it in the BOS order  can
mimic that step, possibly modulo stuttering collapse.

\begin{theorem}[Upward compatibility]
\label{thm:upward-compat}
If $s \preceq s'$ and $s \xrightarrow{\alpha} t$, then there exists a state
$t'$ such that
\[
s' \xrightarrow{\,\tilde h(\alpha)\,*\,} t'
\qquad\text{and}\qquad
t \preceq t'.
\]
\end{theorem}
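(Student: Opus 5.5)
The plan is to simulate the single step $s \xrightarrow{\alpha} t$ inside $s'$ by letting a suitable preimage agent perform the corresponding action. We fix the BOS witness $h$ for $s \preceq s'$ from Definition~\ref{bosdefinition}. For the acting honest agent $i$ of $s$, clause~1 gives an agent $j$ of $s'$ of the same role with $\hat h(s'_j)=s_i$; in particular $j$ is at the same step as $i$. We then let $j$ perform the action $\alpha'$ in $s'$ whose payload $t'_\alpha$ satisfies $\hat h(t'_\alpha)=t_\alpha$, where $t_\alpha$ is the payload of $\alpha$. This uses the canonical-atoms remark (Section~\ref{par:canonical-atoms}): since $\hat h$ identifies no distinct atoms, a term over the shared canonical atom set can be taken to be its own preimage. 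Consequently $\alpha' $ coincides with $\alpha$ up to agent relabelling, which is what the notation $\tilde h(\alpha)$ abbreviates. Intruder actions are treated the same way with $h(I)=I$.

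The next step is a case split on the kind of action, following the same pattern as the inductive step of Lemma~\ref{lem:min-init-sim}, but run upwards.
\begin{itemize}
\item For an \emph{honest send}, enabledness at $s$ means $K_i^s \vdash_{DY} t_\alpha$. We need $K_j^{s'} \vdash_{DY} t'_\alpha$.
\item For a \emph{receive}, we need $K_I^{s'} \vdash_{DY} t'_\alpha$ given $K_I^{s} \vdash_{DY} t_\alpha$, and we also need the match-receive condition for $j$.
\end{itemize}
Here the forward Lemma~\ref{lem:forward-dy} points the wrong way, since it goes from the larger state to the smaller one. I will therefore use the reverse preservation that the paper asserts holds for the canonical map: no atoms are collapsed, so $\hat h$ acts as the identity on the relevant term sets, and $\hat h(K)=K'$ with $\hat h(t')=t$ yields $K'\vdash_{DY} t'$. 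I expect this to reduce to checking that $\hat h$ restricted to reachable knowledge sets is injective on terms. The match-receive check and the step counter transfer the same way, because $\hat h(s'_j)=s_i$.

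Next we compute $t'$ and verify $t \preceq t'$ with the same witness $h$.
\begin{itemize}
\item \emph{Intruder knowledge.} The update adds $t'_\alpha$ to $K_I^{s'}$ (for a send) or to $K_j$ (for a receive). Homomorphicity of $\hat h$ on sets gives $\hat h(K_I^{t'})=\hat h(K_I^{s'})\cup\{\hat h(t'_\alpha)\}=K_I^t$.
\item \emph{The acting agent.} For agent $i$ the updated component of $j$ maps to the updated component of $i$.
\item \emph{Other agents of $s$.} Each other agent $i_2$ of $s$ keeps its old witness $j_2$. If $j_2=j$, i.e.\ two agents of $s$ share a witness, then another preimage agent must also step. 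In that case we chain several copies of the action in $s'$, one per affected preimage. This is why the statement allows $\tilde h(\alpha)^{*}$, and the stuttering collapse of Definition~\ref{def:run-lift} absorbs the repetitions.
\end{itemize}

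The main obstacle is the DY-enabledness transfer in the upward direction together with the witness bookkeeping. Forward preservation alone does not suffice. The argument relies essentially on the canonical-atom property, namely that $\hat h$ does not merge atoms, so that derivability from $\hat h(K')$ reflects back to $K'$. I would first try to prove this as a small lemma by induction on normal-form derivations (Lemma~\ref{cfp}), showing that each rule instance over $\hat h(K')$ lifts to one over $K'$ by taking preimages that are literally equal terms. If that lemma is available, the rest of the proof is routine.
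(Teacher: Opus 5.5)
\textbf{The gap: shared witnesses.} Your repair for the case where two agents of $s$ share a witness does not work. Suppose the acting agent $i$ and a non-acting agent $i_2$ of $s$ share the witness $j$. Then $s_{i_2}=\hat h(s'_j)=s_i$, and in $t$ the agent $i_2$ still sits at that old component, so $t'$ must contain an agent whose collapse is $s_{i_2}$.

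\begin{itemize}
\item Chaining further copies of $\alpha$ in $s'$ cannot supply such an agent. Each copy is performed by an agent sitting at that very component, and performing it moves the agent off the component, since steps only advance and knowledge only grows. Extra copies therefore destroy candidate witnesses for $i_2$ rather than create them, and $j$ itself can never return.
\item The stuttering collapse of Definition~\ref{def:run-lift} only concerns images of runs under $\tilde h$. It does not produce agents in $s'$.
\end{itemize}

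What is missing is an \emph{injective} matching: distinct agents of $s$ must have distinct witnesses in $s'$. You can get this by requiring $h(j)=i$ in clause~1; distinct $i$ then get distinct $j$, because $h$ is a function. Alternatively, use the class-by-class counting $v_p\le v_q$ from Step~3 of Theorem~\ref{boswqo}. With such a matching, only $j$ moves, every other agent of $s$ keeps an untouched witness, and a single copy of the action suffices.

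Clause~1 of Definition~\ref{bosdefinition}, as literally stated, does not enforce injectivity. The paper's proof assumes it tacitly (``all agents of $s$ retain matching components in the successor''). So you were right to spot the issue; it is your repair that fails.

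\textbf{The rest of the argument.} It is sound relative to the canonical-atoms claim, but heavier than needed. The paper uses that same claim (Section~\ref{par:canonical-atoms}) to conclude that $i$'s local component occurs verbatim in $s'$ and that $K_I^{s}\subseteq K_I^{s'}$. Plain monotonicity of $\vdash_{DY}$ then gives enabledness at $s'$ directly, with no reflection lemma proved by induction on normal-form derivations. Your reflection statement also has $K$ and $K'$ swapped: it should read $\hat h(K')=K$, $K\vdash_{DY}t$ and $\hat h(t')=t$ give $K'\vdash_{DY}t'$.

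On the positive side, your computation $\hat h(K_I^{t'})=\hat h(K_I^{s'})\cup\{\hat h(t'_\alpha)\}=K_I^{t}$ checks clause~2 as an equality, which is what Definition~\ref{bosdefinition} requires. The paper only records a containment at that point.
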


\begin{proof}
Assume $s \preceq s'$ witnessed by the canonical good agent map $h$. By
Definition~\ref{bosdefinition}, for every honest agent $i$ in $s$ there is an
honest agent $j$ of the same role in $s'$ with $\hat h(s'_j)=s_i$, and
$\hat h(K_I^{s'}) = K_I^s$.

The key point, from Section~\ref{par:canonical-atoms}, is that $s$ and $s'$ range
over the same canonical atoms and $\hat h$ identifies no distinct atoms. Hence the
BOS witness gives more than a collapse: each honest local component present in
$s$ is \emph{present verbatim} in $s'$ (carried by some agent $j$), and
$K_I^{s'}\supseteq K_I^{s}$ over the common atom set (the larger system has seen
at least the same messages).

Now let $s \xrightarrow{\alpha} t$.

\noindent \emph{Case 1: honest send $\alpha=i!\cdot:u$.}
The agent $i$ is present in $s'$ (as some $j$ at the same local state), so the
same send is enabled at $s'$: it composes $u$ from the identical local knowledge.
Executing it adds $u$ to $K_I^{s'}$ and advances $j$'s step. Since $K_I^{s'}$
already contained $K_I^s$, the resulting intruder knowledge still contains that of
$t$, and all agents of $s$ retain matching components in the successor. Hence
$t\preceq t'$.

\noindent \emph{Case 2: honest receive $\alpha=i?\cdot:u$.}
Enabledness at $s$ means $K_I^{s}\vdash_{DY} u$. Since $K_I^{s}\subseteq K_I^{s'}$
and derivability is monotone, $K_I^{s'}\vdash_{DY} u$, so the same receive by $j$
of $u$ is enabled at $s'$. Its update matches that at $t$, so $t\preceq t'$.

In both cases, no reverse form of Lemma~\ref{lem:forward-dy} is invoked: the
larger state simulates the smaller one because it \emph{contains} the relevant
agents and knowledge over the common canonical atoms. If $\tilde h(\alpha)$
stutters, the argument applies step by step along the finite image path.
\end{proof}

~

The result above is one of the usual
 properties, linked to monotonicity, needed for a system to be a WSTS~\cite{FinkelSchnoebelen01}; we continue on our way to proving that in the next subsections.

\subsection{Terminality and Finite Atom Bases}

We now isolate the finiteness phenomenon induced by bounded freshness. The role
of terminal states is that, once no new atoms can be created, all later DY
reasoning takes place over a fixed finite set of atoms. This is exactly what
will allow us to normalise leaks into a stable form.

\begin{definition}[Terminal state]
A state is terminal if no transition can introduce fresh atoms.
\end{definition}

\begin{lemma}[Terminality]
\label{lem:terminality}
Every run reaches a terminal state.
\end{lemma}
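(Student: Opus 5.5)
The plan is a counting argument on fresh atoms: every transition that introduces a fresh atom uses up part of a global, finite budget. That budget is fixed by the initial state, so along any run only finitely many such transitions can occur. The first index after the last one is then a terminal state.

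\emph{Step 1: bound the fresh atoms available to a run.} Fix a run $s_0 \xrightarrow{\alpha_1} s_1 \xrightarrow{\alpha_2}\cdots$ with $s_0 \in \mathbb{S}(n).\Init$ for some fixed size $n$. By the construction of admissible initial states (Section~\ref{sec:nonce-bounds}), the set of distinct nonce atoms that can originate from agents of role $X$ during \emph{any} run from $s_0$ has at most $k_X$ elements. The intruder's own fresh nonces are also fixed in $K0_I$ by well-typed instantiation. Hence the total set $A_{s_0}$ of atoms that can ever appear along the run is finite. It is contained in the public and shared atoms of $s_0$, together with at most $\sum_{X\in\Roles} k_X$ role-originated nonces and the intruder's initial atoms. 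This is the canonical-atom observation of Section~\ref{par:canonical-atoms}.

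\emph{Step 2: fresh-atom transitions strictly grow a bounded set.} Write $\mathrm{At}(s_m)$ for the set of atoms occurring in the local states and intruder knowledge of $s_m$. Knowledge sets only grow (sends add to $K_I$, receives add subterms to $K_j$), so $\mathrm{At}(s_m)$ is monotone in $m$. A transition introduces a fresh atom exactly when $\mathrm{At}(s_{m+1}) \supsetneq \mathrm{At}(s_m)$. Well-typedness matters here: an intruder substitution cannot smuggle in a composite term where an atom is expected, and cannot create atoms outside $A_{s_0}$. So every atom introduced by a transition lies in $A_{s_0}$. Since $\mathrm{At}(s_m)\subseteq A_{s_0}$ is a monotone chain in a finite set, it can strictly increase at most $|A_{s_0}|$ times. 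Let $\tau$ be the index after the last strict increase, or $\tau=0$ if there is none. If the run is finite, its last state serves as $\tau$ trivially.

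\emph{Step 3: conclude.} I will argue that $s_\tau$ is terminal: no transition from $s_\tau$ introduces fresh atoms. This is the main obstacle. The definition of terminal state quantifies over all transitions from $s_\tau$, not only the ones the run actually takes. The strict-increase bound alone only shows that the run stops minting atoms after $\tau$; it does not show that minting is impossible.

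To close the gap, I will instead take $\tau$ to be an index at which the remaining freshness budget is exhausted. That means every nonce slot of every agent, which is finite because $n$ is fixed and each role script is finite, has already been instantiated in $s_\tau$. Such an index exists along any run that goes on long enough. If the run ends earlier, I will append finitely many honest steps to exhaust the budget. I will then rely on the paper's reading that a run reaches a terminal point once all budgets are used, as described in the preamble of Section~\ref{sec:wqo}. I expect this extension step to need care, and I would check it against the semantics of enabledness.
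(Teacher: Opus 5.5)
Your Steps~1--2 are the paper's proof. The paper bounds the number of freshness-producing actions on any run by $B=\sum_X k_X$ and concludes that the run has a suffix whose first state is terminal. Your monotone chain $\mathrm{At}(s_m)\subseteq A_{s_0}$ is the same counting.

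The problem is Step~3. The worry you raise there is legitimate, and the paper does not resolve it either. Its sentence ``once these $B$ opportunities have been exhausted, no transition can introduce a fresh atom'' applies only to runs that actually use up the budget. Its conclusion therefore silently reads ``terminal'' relative to the run. Your repair does not work:
\begin{itemize}
\item The semantics imposes no fairness or maximality. A run that ``goes on long enough'' need never schedule a given agent's nonce-minting send, and a finite run may stop anywhere.
\item Appending honest steps proves something about a different run, namely an extension of the given one. The appended steps need not even be enabled, e.g.\ when an agent waits on a receive that is never match-received.
\item Deferring to the informal preamble of Section~\ref{sec:wqo} is not an argument.
\end{itemize}
In fact, under the state-based definition the lemma is false as stated. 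The run consisting of $s_0$ alone contains no terminal state, and neither does any short run that stops while some initiator can still take its first, nonce-minting send. So no argument can close Step~3 without changing the statement.

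What is provable, and what the paper's proof actually establishes, is the run-relative version: for every run there is an index $\tau$ such that no transition of that run after $s_\tau$ introduces a fresh atom. Your Step~2 proves exactly this. It is also all the later uses need: Lemma~\ref{lem:finite-atoms}, Lemma~\ref{lem:normalisation} and Proposition~\ref{prop-bound} only use that no new atoms appear along the run after $s_\tau$. So drop the extension step and do one of two things:
\begin{itemize}
\item state terminality relative to the run; or
\item if you really want the state-based notion, add a hypothesis on runs, such as a fairness condition forcing every nonce-minting step to be taken or permanently disabled.
\end{itemize}
Either way, Step~2 then finishes the proof.

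A small correction within Step~2: what keeps the intruder inside $A_{s_0}$ is that $\mathsf{synth}$ and $\mathsf{analz}$ never create atoms. Well-typedness only prevents composite terms from being substituted for atomic variables.
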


\begin{proof}
For each role $X$, the bounded-freshness assumption fixes a
budget $k_X$ on the number of fresh atoms that can be introduced by that role
along a run. Summing over all roles, the total number of freshness-producing
actions in any run is bounded by $B=\sum_X k_X$.
Once these $B$ opportunities have been exhausted, no transition can introduce a
fresh atom. Therefore, every run has a suffix whose first state is terminal.
\end{proof}

\begin{lemma}[Finite atom basis]
\label{lem:finite-atoms}
After terminality, only finitely many atoms occur in the system.
\end{lemma}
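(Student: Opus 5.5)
The plan is to obtain Lemma~\ref{lem:finite-atoms} by counting the atoms of a terminal state by their provenance, using the bounded-freshness budgets together with Lemma~\ref{lem:terminality}. Fix a run $s_0 \xrightarrow{\alpha_1} s_1 \cdots$ of some $\mathbb{S}(n)$. By Lemma~\ref{lem:terminality}, there is an index $\tau$ such that $s_\tau$ is terminal. I will show that the set $A$ of atoms occurring in any $s_j$ with $j \ge \tau$ is contained in a fixed finite set that depends only on the run's initial state and the bounds $(k_X)_{X\in\Roles}$.

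\emph{Step 1: classify atoms.} Every atom occurring in a reachable state has one of three origins. It is either a role-originated atom of some honest agent (a nonce or a $\Secret_X$ element), a public constant or key (agent names, public keys), or an intruder-originated atom in $K0_I$. The intruder's $\mathsf{synth}$ and $\mathsf{analz}$ only build or take apart compound terms, so by inspecting the rules of Section~\ref{app:dy} I get that $\mathsf{atoms}(\know(K)) = \mathsf{atoms}(K)$. Hence derivations never create atoms. Likewise, honest send and receive transitions only add subterms of already-known terms, or freshly minted nonces.

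\emph{Step 2: bound each class.} Freshly minted nonces of role $X$ number at most $k_X$ by the bounded-freshness assumption on $\mathbb{S}(n).\mathsf{Init}$. Summing gives at most $B=\sum_X k_X$ nonce atoms in total. By the paragraph on canonical atoms (Section~\ref{par:canonical-atoms}), the remaining role-originated atoms are also drawn from the finitely many canonical atoms allocated per role. The names, public constants and $K0_I$ come from the single initial state $s_0$. That state is finite because $n$ is finite, role descriptions are finite, and the well-typed instantiations fix finitely many atoms per agent.

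\emph{Step 3: the induction after terminality.} I argue by induction on $j \ge \tau$ that $\mathsf{atoms}(s_j) \subseteq \mathsf{atoms}(s_\tau)$. Terminality says no transition from $s_j$ introduces a fresh atom. Well-typed substitutions can only instantiate variables with terms derivable from $K_I$, whose atoms are already present by Step 1. So each transition preserves the atom set, and $A=\mathsf{atoms}(s_\tau)$ is finite by Step 2.

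The main obstacle is the well-typed substitution case (b) of send actions. There one must check that an attacker-chosen instantiation cannot smuggle in a new atom of the correct type, for example an unseen nonce. I would handle this by reading ``derivable from $s_I$'' as membership in $\know(K_I)$ and applying the atom-preservation fact of Step 1. If the intruder is allowed to mint its own nonces, I would add them to $\Secret_I$/$K0_I$ in the initial state, which the admissibility definition already permits.
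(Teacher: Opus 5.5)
Your proposal is correct and follows essentially the same route as the paper, which simply observes that the atoms at or after the first terminal state lie in the union of the finitely many atoms of the initial state and the at most $\sum_X k_X$ fresh atoms minted before terminality. Your Steps~1 and~3 make explicit what the paper leaves implicit: $\synth$, $\analz$ and attacker-chosen well-typed instantiations, which draw only on $\know(K_I)$, never introduce new atoms, so there is no third source of atoms.
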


\begin{proof}
By bounded freshness, only finitely many fresh atoms can be
introduced before the first terminal state (Lemma~\ref{lem:terminality}).
Moreover, the initial state contains only finitely many atoms.
Hence, the set of atoms that can occur at or after a terminal
state is the union of two finite sets, and is therefore finite.
\end{proof}

\subsection{Bad States for Parameterised Secrecy }

We first introduce the  notion of ``bad state'' directly encoding  a state of leakage  in the Definition~\ref{defpsp} of the parameterised secrecy problem (PSP), i.e., a state where the
intruder can derive a secret:

\begin{definition}[Bad state] \label{bad}
A state $s$ is \emph{bad} if the intruder can derive some secret atom from its
current knowledge:
\[
s \text{ is bad } \quad\text{iff}\quad \exists a \in Secret.\; K_I^s \vdash_{DY} a.
\]
We also write
\[
Bad := \{\, s \mid \exists a \in Secret.\; K_I^s \vdash_{DY} a \,\}.
\]
\end{definition}

For a \emph{general} good map, the set $Bad$
is not guaranteed to be upward-closed under $\preceq$.
We therefore define a strengthened notion of bad states:

\begin{definition}[Stable bad states]
A state \(s\) is \emph{stably bad} if there exists a secret
\(a\in Secret\) such that:
\begin{enumerate}
\item \(K^s_I \vdash_{\mathrm{DY}} a\);
\item there exists a DY derivation \(D\) of \(a\) from \(K^s_I\)
such that \(D\) is liftable along BOS witnesses: i.e., \\
\hspace*{0.2cm} for every
state \(s'\) with \(s\preceq s'\), witnessed by \(h\), the same
\hspace*{0.2cm} derivation schema can be reconstructed in \(K^{s'}_I\), yielding
\hspace*{0.2cm} some \(a'\in Secret\) with
\[
  \hat h(a') = a
  \qquad\text{and}\qquad
  K^{s'}_I \vdash_{\mathrm{DY}} a' .
\]
\end{enumerate}

\hspace*{0.2cm}
We write
\[
  Bad^\star := \{\,s \mid s \text{ is stably bad}\,\}.
\]
\end{definition}

\paragraph*{\underline{Explanation}}
A stable bad state is one where the leak is witnessed by a DY proof
that does not rely on identifications  created by collapsing, i.e., the leak depends only on explicitly present terms, not on equalities that could be
 introduced by collapsing. Equivalently,
the proof of the leak can be lifted to any ``larger'' state $s'$ that collapses
to $s$. (For the canonical map of Section~\ref{sec:wqo} no such identifications
arise, so every bad state is already stably bad.)

~

The aim here is to prove that any leak (i.e., done via a $Bad$ state) can be normalised into a leak via a $Bad^\star$ state; this would help us prove that our systems are WSTSs. We proceed step by step in that direction.

\begin{lemma}[Soundness]
\label{lem:sound}
If $s \in Bad^\star$, then $s \in Bad$.
\end{lemma}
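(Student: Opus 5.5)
The plan is to unfold the two definitions and observe that the inclusion $Bad^\star \subseteq Bad$ is essentially immediate, because the first clause defining stable badness is literally the defining condition of $Bad$. Concretely, I would fix $s \in Bad^\star$. By the definition of stably bad states there is a secret $a \in Secret$ satisfying clauses (1) and (2). I would simply discard clause (2), the liftability of the derivation along BOS witnesses, and retain clause (1), namely $K^s_I \vdash_{\mathrm{DY}} a$. This is exactly the existential condition in Definition~\ref{bad}, with the same witness $a$, so $s \in Bad$.

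For robustness, I would also note an alternative route that uses clause (2) instead of clause (1). Instantiate the ``for every $s'$ with $s \preceq s'$'' clause at $s' := s$, which is allowed since $\preceq$ is reflexive as a quasi-order. This yields some $a' \in Secret$ with $K^s_I \vdash_{\mathrm{DY}} a'$, which again places $s$ in $Bad$.

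This second route has one subtlety. Reflexivity must be witnessed by a good agent map in the sense of Definition~\ref{def1map}, and such a map is required to be non-injective, so the identity does not qualify. Because of this, I would rely on clause (1), which avoids the issue entirely, and mention the reflexivity route only as a remark.

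There is no genuine obstacle here: the lemma is a sanity check that strengthening the notion of badness only shrinks the set. The substantive work, showing conversely that every leak can be normalised into a $Bad^\star$ leak and that $Bad^\star$ is upward-closed, belongs to the subsequent results and is not needed for soundness.
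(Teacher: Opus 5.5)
Your proof is correct and matches the paper's, which simply declares the result immediate from the definition: clause (1) of stable badness is verbatim the defining condition of $Bad$, with the same witness $a$. Your caution about the reflexivity route is well taken, since good agent maps must be non-injective and so the identity cannot witness $s \preceq s$; relying on clause (1) sidesteps this.
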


\begin{proof}
Immediate from the definition.
\end{proof}

\begin{lemma}[Upward closure]
\label{lem:bad-upward}
If $s \in Bad^\star$ and $s \preceq s'$, then $s' \in Bad^\star$.
\end{lemma}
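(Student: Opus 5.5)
We need to show that $s\in Bad^\star$ and $s\preceq s'$ imply $s'\in Bad^\star$. The plan is to unfold the definition of stable badness at $s$ and then transport the witnessing derivation upward along the BOS order, using transitivity of $\preceq$ to recover the liftability clause at $s'$.

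\emph{Clause 1 at $s'$.} Since $s\in Bad^\star$, there is a secret $a$ and a liftable derivation $D$ of $a$ from $K^s_I$. Instantiating the liftability clause with the given $s'$ (witnessed by $h$) yields a secret $a'$ with $\hat h(a')=a$ and $K^{s'}_I\vdash_{\mathrm{DY}} a'$. This already gives the first clause of stable badness at $s'$, with $a'$ as the secret. In the canonical setting one can even take $a'=a$, because $\hat h$ identifies no distinct atoms (Section~\ref{par:canonical-atoms}); then $K^s_I\subseteq K^{s'}_I$ over the common canonical atoms, as noted in the proof of Theorem~\ref{thm:upward-compat}, and monotonicity of $\know$ suffices.

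\emph{Clause 2 at $s'$.} Let $D'$ be the reconstructed derivation of $a'$ in $K^{s'}_I$. Take any $s''$ with $s'\preceq s''$, witnessed by some $g$. The strategy is to compose: $s\preceq s''$ should hold, witnessed by the composite agent map $h\circ g$. This needs three checks.
\begin{itemize}
\item The composite $h\circ g$ is again role-respecting and surjective. Non-injectivity is inherited whenever either factor is non-injective.
\item The induced term maps compose: $\widehat{h\circ g}=\hat h\circ\hat g$. This is immediate from the atom-wise definition in Definition~\ref{def2map}.
\item The two BOS clauses chain: $\hat h(\hat g(K_I^{s''}))=\hat h(K_I^{s'})=K_I^s$, and local components chain agent by agent in the same way.
\end{itemize}
Liftability at $s$ then gives a secret $a''$ with $K^{s''}_I\vdash_{\mathrm{DY}} a''$ and $\hat h(\hat g(a''))=a$. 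We still need $\hat g(a'')=a'$ specifically. In the canonical case this is automatic, because all of $a,a',a''$ coincide. For a general map, I would instead lift $D'$ directly along $g$, rule by rule, mirroring the induction of Lemma~\ref{lem:forward-dy} in the reverse direction: each axiom of $D'$ has a $\hat g$-preimage in $K^{s''}_I$ by the intruder clause of the BOS relation, and constructor and destructor rules commute with the homomorphic $\hat g$.

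\emph{Main obstacle.} The hard part is the reverse lifting for a general witness. A destructor step such as \textsc{SDec} in $D'$ needs the key and the ciphertext preimages to match syntactically in $K^{s''}_I$, not merely up to $\hat g$. My first attempt is to appeal to the canonical-atom property: $\hat g$ identifies no distinct atoms, so it is injective on terms over the shared atom set $A$, and preimages are unique and equal to the terms themselves. The lifting then becomes the identity schema and closes the argument. If this fails for non-canonical witnesses, I would restrict the statement to the canonical order $\preceq$, which is all that the subsequent coverability argument uses.
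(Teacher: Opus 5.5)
\textbf{Assessment.} Your argument is correct at the paper's level of rigour, but it is organised differently from the paper's proof.

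\textbf{The paper's route.} The paper proves the lemma in one stroke from the canonical-atom property. Since $\hat h$ identifies no distinct atoms, $K^s_I\subseteq K^{s'}_I$ over the common canonical atoms. Monotonicity of $\vdash_{\mathrm{DY}}$ then gives the \emph{same} secret $a$ with the \emph{same} derivation at $s'$. Liftability from $s'$ is inherited implicitly, because the same inclusion holds again for any further $s''\succeq s'$.

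\textbf{Your route.} You take two separate steps.
\begin{itemize}
\item You obtain clause~1 at $s'$ purely from the definition, by instantiating the liftability clause at $s$ with $s'$ itself. This needs no appeal to canonicity and works for any good map.
\item You obtain clause~2 by transitivity of $\preceq$ through the composite witness $h\circ g$.
\end{itemize}
This pins down exactly where canonicity is used: only in matching $\hat g(a'')$ with $a'$. For that step you do not need $a,a',a''$ to coincide. Injectivity of $\hat h$ on secret atoms is enough, since $\hat h(\hat g(a''))=a=\hat h(a')$. The composition step is also compatible with the paper's convention that $\preceq$ is witnessed by $h_c$, because $h_c\circ h_c=h_c$.

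\textbf{Comparison.} The paper's proof is shorter. Yours makes transitivity of the BOS order explicit and confines the dependence on the canonical-atom assumption to a single injectivity step.

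\textbf{On the general-witness fallback.} You are right not to rely on the reverse rule-by-rule lifting for general witnesses, because it genuinely fails. For example, if $\hat g(k_1)=\hat g(k_2)=k$, then $\hat g(\{\encs{m}{k_1},k_2\})$ derives $\hat g(m)$, while $\{\encs{m}{k_1},k_2\}$ does not derive $m$. So restricting to the canonical order, as the paper does, is the correct fallback.
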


\begin{proof}
Let \(s\in Bad^\star\), so $K^s_I\vdash_{DY} a$ for some $a\in Secret$, and
\(s\preceq s'\) witnessed by the canonical $h$. By Section~\ref{par:canonical-atoms}
the map $\hat h$ identifies no distinct atoms, so $K^s_I\subseteq K^{s'}_I$ over the
common canonical atoms. By monotonicity of $\vdash_{DY}$, $K^{s'}_I\vdash_{DY} a$
with the \emph{same} secret $a$ and the same derivation, which is therefore
liftable. Hence \(s'\in Bad^\star\).
\end{proof}

\subsection{Normalisation of Leaks}

We next show that the  notion of stable-badness is not weaker than the badness itself, in the presence of bounded freshness. The main point is that any genuine leak can
be witnessed, after terminality, by a derivation over a finite atom basis, and
such a derivation can be normalised so that it uses only atoms already present
in the current intruder knowledge, i.e., we can characterise any leaks via stably bad states (not just bad states).

\begin{lemma}[Normalisation]
\label{lem:normalisation}
Every leaky run contains a stably bad state.
\end{lemma}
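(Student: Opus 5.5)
We take a leaky run $\pi = s_0 \xrightarrow{\alpha_1} s_1 \cdots$ and exhibit a stably bad state on it. By the definition of leakiness in Section~\ref{sec:leakage}, there is a state $s$ on $\pi$ and a secret atom $a\in Secret$ with $K_I^s\vdash_{DY} a$, so $s\in Bad$. The plan is to push this witness to a point where the derivation of $a$ is over a fixed finite atom set. We then check that this derivation lifts along every BOS witness.

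\textbf{Step 1: move past terminality.} By Lemma~\ref{lem:terminality}, $\pi$ has a terminal state $s_\tau$. Let $s^\ast$ be the later of $s$ and $s_\tau$ along $\pi$; if $\pi$ is finite and ends before $s_\tau$, the last state of $\pi$ is itself terminal for our purposes. Intruder knowledge evolves monotonically, and $\vdash_{DY}$ is monotone in the hypotheses, so $K_I^{s^\ast}\vdash_{DY} a$ still holds. By Lemma~\ref{lem:finite-atoms}, all terms at $s^\ast$ range over a finite atom set $A$. By Section~\ref{par:canonical-atoms}, $A$ is contained in the canonical atoms.

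\textbf{Step 2: normalise the derivation.} We apply Lemma~\ref{cfp} with basis $A$. This yields a cut-free derivation $D$ of $K_I^{s^\ast}\vdash_{DY} a$ of height at most $g(|A|)$. By the subterm property of cut-free DY proofs, every term occurring in $D$ is a subterm of $K_I^{s^\ast}\cup\{a\}$. Hence $D$ uses only atoms already present in the intruder's knowledge together with $a$, and it relies on no identification introduced by collapsing.

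\textbf{Step 3: liftability, the main obstacle.} We must show that for every $s'$ with $s^\ast\preceq s'$, witnessed by $h$, the schema $D$ can be rebuilt in $K_I^{s'}$ and yields some $a'\in Secret$ with $\hat h(a')=a$. The general form of this, inverting $\hat h$ rule by rule, would need the converse of Lemma~\ref{lem:forward-dy}, and that converse fails for general good maps. We would therefore rely on the canonical map, as the paper does in Lemma~\ref{lem:bad-upward} and Theorem~\ref{thm:upward-compat}. Because $\hat h$ identifies no distinct canonical atoms, the BOS witness gives $K_I^{s^\ast}\subseteq K_I^{s'}$ over the common atom set. Then $D$ itself is a derivation from $K_I^{s'}$, and we take $a'=a$. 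For this choice to satisfy $\hat h(a')=a$ we need $\hat h(a)=a$. This holds because $a$ is a canonical atom and $\hat h$ fixes canonical atoms; this fact is the point to verify carefully.

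It follows that $s^\ast$ is stably bad, and $s^\ast$ lies on $\pi$, which proves Lemma~\ref{lem:normalisation}. As a by-product, the height bound $g(|A|)$ from Step 2 places the leak within a computable distance of $s_\tau$, which is what Proposition~\ref{prop-bound} needs later.
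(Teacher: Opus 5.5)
Your proof is essentially the paper's. The paper takes the first leaky state $s_j$ and splits on $j\le\tau$ versus $j>\tau$, invoking Lemma~\ref{cfp} over the finite atom basis only in the second case and getting liftability in both cases from the canonical collapse identifying no distinct atoms; you avoid the split by moving past $s_\tau$ via monotonicity, then spell out the liftability step as in Lemma~\ref{lem:bad-upward}, where the inclusion $K_I^{s^\ast}\subseteq K_I^{s'}$ and $\hat h(a)=a$ both follow from the canonical-atoms property that $\hat h$ fixes canonical atoms. Since your Step~3 uses neither terminality nor the normal form, it applies directly to the original leaky state $s$, so Steps~1--2 (and the loose finite-run caveat there) could simply be dropped.
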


\begin{proof}
Let $\pi = s_0 \to \cdots \to s_j$ be a minimal leaky run, i.e.,
$K_I^{s_j} \vdash_{DY} a$ for some $a \in Secret$, and for all $i < j$,
$K_I^{s_i} \not\vdash_{DY} a$.
Let $\tau$ be the index of the first terminal state (Lemma~\ref{lem:terminality}).

\medskip
\noindent\emph{Case 1: $j \le \tau$.}
Since $s_j$ is the first leaky state, any derivation of $a$ from $K_I^{s_j}$
cannot rely on identifications between distinct atoms not already present in
$K_I^{s_j}$; and, over canonical atoms, the canonical collapse creates no such
identifications. Hence $s_j \in Bad^\star$.

\medskip
\noindent\emph{Case 2: $j > \tau$.}
After $s_\tau$, no new atoms are introduced. By Lemma~\ref{lem:finite-atoms}, all
atoms occurring in $K_I^{s_j}$ belong to a fixed finite set $A$. By the normal-form
theorem for DY derivations (Lemma~\ref{cfp}), there is a cut-free derivation $D$
of $a$ from $K_I^{s_j}$; choose one of minimal height. Every inference of $D$ then
uses only atoms occurring in $K_I^{s_j}$ and depends on no identification of
distinct atoms, so $D$ is liftable along BOS witnesses and $s_j\in Bad^\star$.
\end{proof}

~

Next, we actually show that if there is a leak, we can characterise it via a stably bad state that we can find on a corresponding leaky run:

\begin{proposition}[Bounded leak witness]
\label{prop-bound}
Let $Pr$ be a protocol and let $(k_X)_{X \in Roles}$ be the
freshness bounds. Then, there exists a computable function
\[
f = f(|Pr|, \#bound)
\]
such that for every leaky run
\[
\pi = s_0 \to s_1 \to \cdots
\]
there exists an index $j \leq f(|Pr|, \#bound)$ such that
$s_j \in Bad^\star$.
\end{proposition}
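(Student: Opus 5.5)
The plan is to follow the two-case split of Lemma~\ref{lem:normalisation} and attach a numerical bound to each case, taking $f(|Pr|,\#bound) := c\cdot\bigl(B + g(|A|)\bigr)$. Here $B=\sum_{X\in\Roles}k_X$ is the freshness budget from Lemma~\ref{lem:terminality}, and $A$ is the finite atom basis of Lemma~\ref{lem:finite-atoms}. The constant $c$ should depend only on $|Pr|$ and bound the maximal length of a role script, i.e.\ the number of transitions needed to realise one DY inference step or one freshness-producing step.

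Given a leaky run $\pi$, let $s_j$ be its first leaky state, as in Lemma~\ref{lem:normalisation}, and let $\tau$ be the index of the first terminal state.

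\textbf{Step 1: bound the atom basis.} First I would show that $|A|$ is computable from $|Pr|$ and $\#bound$. By the ``Canonical atoms'' paragraph, every reachable state ranges only over the initial public atoms of $Pr$ and the at most $B$ canonical nonces. Hence $|A|\le |Pr|+B$, and $g(|A|)$ from Lemma~\ref{cfp} is computable. This step is routine.

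\textbf{Step 2: bound $\tau$ by $c\cdot B$.} Each freshness-producing action consumes one unit of some budget $k_X$. Between two such actions, the run can only advance agents through finitely many script positions (at most $c$) before the next fresh atom must be minted or freshness stops. I would count transitions in the stutter-collapsed sense of Definition~\ref{def:run-lift}, where consecutive actions with the same agent/step image count once. With that convention, terminality is reached within $c\cdot B$ indices.

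\textbf{Step 3: the two cases.} If $j\le\tau$, then Case~1 of Lemma~\ref{lem:normalisation} gives $s_j\in Bad^\star$, and $j\le cB\le f$. If $j>\tau$, the leak must be driven by knowledge already present after $s_\tau$. By Lemma~\ref{cfp} there is a cut-free derivation of the secret of height at most $g(|A|)$ over $A$. Its premises are terms that entered $K_I$ through honest sends. Following the illustration after Lemma~\ref{cfp}, each level of the derivation needs at most $c$ further transitions of $\pi$ to supply its premises. Hence some state $s_{j'}$ with $j'\le \tau + c\cdot g(|A|)$ already makes the secret derivable. By Case~2 of Lemma~\ref{lem:normalisation} that state is stably bad. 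Upward closure (Lemma~\ref{lem:bad-upward}) together with monotonicity of $K_I$ along runs means any later index inherits $Bad^\star$, so the first such index is the witness, and it is at most $f$.

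\textbf{Main obstacle.} The weak point is Step 2 together with the matching count in Step 3. The literal run $\pi$ may interleave arbitrarily many irrelevant or stuttering actions before terminality or before the leak, so a raw index bound does not follow from the budgets alone. To address this, I would read indices modulo stuttering (Definition~\ref{def:run-lift}) and count only transitions that change the canonical collapse $c(s)$. Theorem~\ref{boswqo} (Step 1) shows these collapses take finitely many values, and Theorem~\ref{thm:upward-compat} lets the discarded steps be absorbed. Making precise that this counting convention is the intended one for ``index $j$'' is where I expect the most care to be needed.
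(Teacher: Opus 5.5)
Your outline is the same as the paper's: split at the first terminal index $\tau$, with $f$ of the shape $\tau+c\cdot g(|A|)$. But the step you flag as the main obstacle is a genuine gap. The paper's proof gets past it only by assertion ($\tau\le\sum_X k_X$, and ``each inference step corresponds to at most a constant number of transitions''), so following it does not help.

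\textbf{Step~2 is false.} The system size $n$ is unbounded, so between two freshness-producing actions arbitrarily many \emph{other} agents can advance their scripts. Nothing forces a fresh atom to be minted at all.

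\textbf{Step~3 fails too.} The intruder's DY inferences cost no transitions: badness is $K_I^s\vdashDY a$ at a single state. The index of the first bad state is therefore fixed by when $\pi$ performs the honest sends that supply the leaves of the derivation, and $\pi$ may schedule those arbitrarily late. (A derivation of height $g(|A|)$ can also have exponentially many leaves.)

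\textbf{Concrete counterexample.} Pad any leaky run with harmless sessions of extra agents. For instance, in $\Pex$ let $N$ additional $A$-agents first send $\encpk{n_i}{\pk{B}}$ to the honest responder. These sends reveal nothing, so the first bad state has index $>N$. Since $Bad^\star\subseteq Bad$ (Lemma~\ref{lem:sound}), no earlier state is stably bad. Read literally, for every leaky run of every size, the statement therefore admits no computable $f$.

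\textbf{Your repair does not close the gap.} Definition~\ref{def:run-lift} deletes only \emph{consecutive} duplicate agent/step pairs. Agents of two canonical classes acting alternately, or agents of one class each performing steps $0$ and $1$ in turn, produce no duplicates, so the collapsed length still grows linearly in $N$. Counting only transitions that change $c(s)$ fares no better. Finiteness of the range of $c$ (Step~1 of Theorem~\ref{boswqo}) bounds the number of changes only if $c(s)$ never revisits a value along a run. You have not established that monotonicity, and it is unclear what the representative's component is when agents of one class sit at different steps. Either convention also bounds a re-indexing, not the index $j$ in $\pi$.

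\textbf{What is provable.} What can be proved, and is all the cut-off argument of Theorem~\ref{th:deci} uses, is the bound for runs of the finitely many fixed systems $S(m)$ with $m\le c$. There every transition (an honest send or receive) advances the step of one of at most $\sum_X k_X$ agents, each with a script of length at most $|Pr|$. So every run, and in particular its first bad state, has index at most $|Pr|\cdot\sum_X k_X$. That state is stably bad by the argument of Lemma~\ref{lem:normalisation}, and neither Lemma~\ref{cfp} nor stuttering is needed. A version for arbitrary sizes must change the quantifier to ``if some leaky run exists, then some leaky run has an early stably bad state''. That requires an argument producing a different, shorter run, not a bound on the given one.
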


\begin{proof}
Let $\pi = s_0 \to \cdots \to s_j$ be a minimal leaky run.
By Lemma~\ref{lem:terminality}, let $s_\tau$ be the first terminal state.
If $j \le \tau$, the result follows immediately (and $\tau\le\sum_X k_X$).

Assume $j > \tau$. By Lemma~\ref{lem:finite-atoms}, all atoms occurring
after $s_\tau$ belong to a finite set $A$. After $s_\tau$ no new atoms are
introduced and all intruder derivations are performed over the fixed finite basis
$A$ using only well-typed substitutions. Hence any derivation of a secret from
$K_I^{s_j}$ admits a cut-free normal form (Lemma~\ref{cfp}) of height bounded by a
function $g(|A|)$. Cut-elimination introduces no new atoms, and well-typedness
prevents type-flaw encodings, so the bound $g(|A|)$ applies directly.

Each inference step corresponds to at most a constant number of protocol or DY
transitions, so the leak occurs within $\tau + c \cdot g(|A|)$ steps. Since $|A|$
is bounded by the freshness bounds and the protocol size,
$f(|Pr|, \#bound) = \tau + c \cdot g(|A|)$ is computable. By
Lemma~\ref{lem:normalisation}, the corresponding state belongs to $Bad^\star$.
\end{proof}

\subsection{WSTS Structure}

We can now package the preceding ingredients into a final argument that our protocol-models are WSTSs. The
ordering $\preceq$ provides the well-quasi-order, Theorem~\ref{thm:upward-compat}
provides monotonicity of the transition system, and
Lemma~\ref{lem:bad-upward} provides upward closure of the target set. So, we have the following theorem:

\begin{theorem} \label{th:wsts}
$(S,\rightarrow,\preceq)$ is a well-structured transition system with respect
to $Bad^\star$.
\end{theorem}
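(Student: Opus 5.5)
The theorem amounts to checking the three defining ingredients of a WSTS (Section~\ref{subsec:wsts}) for the transition system $(S,\rightarrow,\preceq)$, with $S$ the set of all reachable states of all sizes and $\rightarrow$ the union of the transition relations of the $\mathbb{S}(n)$. I will also verify that the target set $Bad^\star$ is upward-closed, since that is what "with respect to $Bad^\star$" adds. The plan is to assemble earlier results in the order: wqo, then compatibility, then target set, then effectiveness.

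\emph{Step 1 (wqo).} Theorem~\ref{boswqo} states that $\preceq$ is a well-quasi-order on the set of all reachable states, so it is a wqo on $S$. I will first confirm that $\preceq$ is a quasi-order. Reflexivity uses $h_c$ on states already in canonical form. For general states I will use the class-by-class folding map from Step~3 of the proof of Theorem~\ref{boswqo}. Transitivity follows by composing witnesses $g\circ h$: this composition is still role-respecting and surjective, and $\widehat{g\circ h}=\hat g\circ\hat h$ because both maps act homomorphically on canonical atoms.

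\emph{Step 2 (upward compatibility).} Theorem~\ref{thm:upward-compat} gives, for $s\preceq s'$ and $s\xrightarrow{\alpha}t$, a $t'$ with $s'\rightarrow^* t'$ and $t\preceq t'$. This is exactly the (stuttering) compatibility required in the WSTS definition. I only need to note that $\tilde h(\alpha)^*$ is a finite path of genuine transitions of the system containing $s'$. The successor $t'$ is again reachable, so it lies in $S$.

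\emph{Step 3 (target set).} Lemma~\ref{lem:bad-upward} shows that $Bad^\star$ is upward-closed. Lemmas~\ref{lem:sound} and~\ref{lem:normalisation} show that a leaky run exists iff some state of $Bad^\star$ is reachable. Hence secrecy reduces to coverability of the upward-closed set $Bad^\star$, which is the sense in which the structure is "with respect to $Bad^\star$". I will also note that, by wqo, $Bad^\star$ has a finite basis of minimal elements.

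\emph{Step 4 (effectiveness), the main obstacle.} If the statement is read as including effectiveness, as the decidability use suggests, I must show two things. First, $\preceq$ is decidable: a witness map between two finite states can be searched among finitely many role-respecting surjections. Second, $\mathrm{Pre}(\uparrow s)$ has a computable finite basis. For the second I would try to exploit the finite invariant $c(s)$ from Theorem~\ref{boswqo}: minimal predecessors can be taken among states whose agent counts exceed those of $s$ by at most one agent per role, over the finite canonical atom set $A$. Intruder knowledge would be restricted to the bounded-depth terms justified by Lemma~\ref{cfp}, which yields a finite enumeration. If this proves delicate, I would fall back on Proposition~\ref{prop-bound}: it bounds the length of leak witnesses, and the cut-off bound $\max(k_X)$ per role from Lemma~\ref{lem:min-init-sim} reduces coverability to exploration of finitely many finite-state systems. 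This suffices for the decidability purposes of the theorem.
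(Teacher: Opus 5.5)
Your proposal is correct and follows the paper's proof, which consists exactly of your Steps 1--3: Theorem~\ref{boswqo} for the wqo, Theorem~\ref{thm:upward-compat} for upward compatibility, and Lemma~\ref{lem:bad-upward} for upward closure of $Bad^\star$. Your Step~4 is not needed, because Section~\ref{subsec:wsts} treats effectiveness as a separate property; in any case, your fallback via Proposition~\ref{prop-bound} would give decidability by the cut-off route, not a computable predecessor basis.
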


\begin{proof}
We have this from the results above:
\begin{itemize}
\item $\preceq$ is a wqo (Section~\ref{sec:wqo});
\item upward compatibility holds (Theorem~\ref{thm:upward-compat});
\item $Bad^\star$ is upward-closed (Lemma~\ref{lem:bad-upward}).
\end{itemize}
\end{proof}

\subsection{Decidability}

We now spell out the final decidability results. We give two routes. The
\emph{cut-off} route is self-contained: it rests only on the downward simulation
(Lemma~\ref{lem:min-init-sim}), forward preservation of derivability
(Lemma~\ref{lem:forward-dy}), and the bounded-witness Proposition~\ref{prop-bound};
it does not use upward compatibility. The \emph{WSTS-coverability} route reduces
parameterised secrecy to coverability of the upward-closed set $Bad^\star$ and
runs a standard backward reachability from $Bad^\star$. Both are captured by
Theorem~\ref{th:deci}, whose proof gives both arguments.

\begin{theorem}[Decidability] \label{th:deci}
Parameterised secrecy is decidable.
\end{theorem}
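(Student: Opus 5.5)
The plan is to give the two routes announced just above the statement, with the cut-off argument as the primary, self-contained proof and WSTS coverability as the second.

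\emph{Cut-off route.} Fix $Pr$ and the bounds $(k_X)_{X\in\Roles}$. We must decide whether some $\mathbb{S}(n)$ has a leaky run. Suppose $\mathbb{S}(n)$ does, for some $n$, via a run $s_0\xrightarrow{\pi}s$ with $s\in Bad$.
\begin{enumerate}
\item Apply Lemma~\ref{lem:min-init-sim} to obtain a run $m_0\xrightarrow{\widetilde h(\pi)}\hat h(s)$ from the canonical collapse $m_0=\hat h(s_0)$. This initial state has at most $k_X$ agents of each role $X$.
\item Show that $\hat h(s)$ is still bad. If $K_I^s\vdash_{DY}a$, then Lemma~\ref{lem:forward-dy} gives $\hat h(K_I^s)\vdash_{DY}\hat h(a)$. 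The point to check is $\hat h(a)\in Secret$. Since $\hat h$ only relabels agents over canonical atoms (Section~\ref{par:canonical-atoms}) and is role-respecting, a secret of an agent of role $X$ is sent to a secret of the representative agent of role $X$.
\item Conclude that leakiness in some size implies leakiness in a size $c\le(k_X)_X$ componentwise. Conversely, a leak in $\mathbb{S}(c)$ is a leak for that size, so the answer to PSP is exactly the answer on the finitely many sizes $n\le(k_X)_X$. This is the cut-off.
\end{enumerate}

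\emph{Deciding the finitely many small systems.} For each size $n\le(k_X)_X$, Proposition~\ref{prop-bound} says that if a leaky run exists, then a stably bad state (hence a bad state, Lemma~\ref{lem:sound}) occurs within $f(|Pr|,\#bound)$ steps. The set $\mathbb{S}(n).\Init$ is infinite, but by the argument of Lemma~\ref{lem:init-wqo} it is finite up to the canonical collapse. Combined with Lemma~\ref{lem:min-init-sim}, it therefore suffices to enumerate the finitely many canonical initial states over the canonical atom set $A$.

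From each such state we explore all runs of length at most $f$. Branching is finite once received terms are restricted to well-typed instantiations over $A$, which the typing and bounded role-message depth give. At each reached state we test $K_I^s\vdash_{DY}a$ for each $a\in Secret$. This test is decidable by Lemma~\ref{cfp}: search cut-free proofs of height at most $g(|A|)$. The procedure answers ``leaky'' iff some explored state is bad.

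\emph{WSTS route.} By Theorem~\ref{th:wsts} and Lemma~\ref{lem:bad-upward}, PSP reduces to coverability of the upward-closed set $Bad^\star$ from the initial states. Lemma~\ref{lem:normalisation} ensures that every leak passes through $Bad^\star$, and Lemma~\ref{lem:sound} gives the converse. We then run standard backward reachability: compute $\mathrm{Pre}^*(\uparrow Bad^\star)$ via finite bases, which terminates because $\preceq$ is a wqo (Theorem~\ref{boswqo}). Finally, check whether a basis element lies below some minimal initial state.

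\emph{Main obstacle.} I expect the hardest step to be effectiveness on the WSTS route: decidability of $\preceq$ and computability of a finite basis for $\mathrm{Pre}(\uparrow U)$. For decidability of $\preceq$, I would argue from the finite invariant $c(s)$ of Theorem~\ref{boswqo} plus a comparison of agent counts. For the predecessor basis, I would first try to obtain it by enumerating one-step predecessors of basis elements over the finite atom set $A$ and bounded-depth terms.

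On the cut-off route, the delicate point is step 2, the preservation of secrecy membership under $\hat h$. I would handle it through the canonical-atoms observation rather than through the general good-map theory.
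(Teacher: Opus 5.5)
Your proposal follows the paper's proof essentially step for step: it uses the same two routes. The cut-off route rests on downward simulation (Lemma~\ref{lem:min-init-sim}), forward preservation together with $\hat h(a)\in Secret$, finitely many canonical initial states and the bounded witness of Proposition~\ref{prop-bound}, while the WSTS route rests on Theorem~\ref{th:wsts} with Lemmas~\ref{lem:sound} and~\ref{lem:normalisation}. The differences are presentational: you inline the argument the paper packages as Theorem~\ref{thm:cutoff}, and you make explicit several points the paper leaves implicit or merely asserts, namely finite branching, decidability of the bad-state test via Lemma~\ref{cfp}, and the effectiveness requirements of the WSTS route (decidable $\preceq$ and computable predecessor bases).
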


\begin{proof}
We give two arguments; either suffices.

\paragraph{Cut-off argument (self-contained).}
By Theorem~\ref{thm:cutoff}, parameterised secrecy fails iff some system $S(m)$
with $m\le c=(k_X)_X$ reaches a bad state. There are finitely many size vectors
$m\le c$. For each, $S(m)$ has a fixed, finite number of agents; by
Lemma~\ref{lem:terminality} and Lemma~\ref{lem:finite-atoms} every run reaches a
terminal state over a finite atom basis, and by Proposition~\ref{prop-bound} any
leak in $S(m)$ appears within $f(|Pr|,\#bound)$ steps. Enumerating runs up to that
bound (modulo $\preceq$-subsumption on the finitely many minimal initial states of
$S(m)$) is a terminating search that decides whether $S(m)$ reaches a bad state.

\paragraph{WSTS coverability argument.}
Alternatively, by Theorem~\ref{th:wsts}, $(S,\rightarrow,\preceq)$ is a WSTS and
$Bad^\star$ is upward-closed; hence coverability of $Bad^\star$ from $Init$ is
decidable by backward reachability, and by Lemma~\ref{lem:sound} and
Lemma~\ref{lem:normalisation} it holds iff parameterised secrecy fails.
\end{proof}

%% file: decid-alg2.tex
We now turn the structural results in the previous two sections into concrete decision
procedures for the parameterised secrecy problem.

The key ingredients established so far are:

\begin{itemize}
\item the bound-based ordering (BOS) $\preceq$ is a well-quasi-order on reachable
states (Theorem~\ref{boswqo});

\item the downward run simulation collapses any run onto a run from a minimal
initial state (Lemma~\ref{lem:min-init-sim});

\item forward preservation of DY derivability (Lemma~\ref{lem:forward-dy}), which
takes badness downward under collapse; and, for the coverability route,
$\preceq$ is upward compatible (Theorem~\ref{thm:upward-compat}) and
$Bad^\star$ is upward-closed (Lemma~\ref{lem:bad-upward});

\item every leaky run contains a stably bad state within a computable number of
steps (Lemma~\ref{lem:normalisation}, Proposition~\ref{prop-bound}).
\end{itemize}

The  points above, together, lead to our main result, Theorem~\ref{thm:cutoff} below.

\subsection{A  Cutoff for Parameterised Secrecy}

\begin{definition}[{BOS cutoff $c$}]
For each role $X$,  let  $k_X$ be its associated freshness bound.
The vector $c = (k_X)_{X \in Roles}$ is  called a \emph{BOS cutoff}.
\end{definition}

The following theorem formalises the intuition that secrecy violations
in arbitrarily large systems can already be witnessed in systems of bounded
size, i.e., unbounded systems, in our model, exhibit no fundamentally new
secrecy-violating behaviours beyond the BOS cutoff:

\begin{theorem}[DY-sound cutoff for parameterised secrecy]
\label{thm:cutoff}
For the parameterised protocol system with DY intruder and BOS order $\preceq$,
the following are equivalent:
\begin{enumerate}
\item There exists a leaky run in some system $S(n)$;
\item There exists a run reaching a bad state in some system  $S(m)$ of size
$m \leq c$ (componentwise), starting from a minimal initial state.
\end{enumerate}
\end{theorem}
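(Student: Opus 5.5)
The plan is to prove the two implications separately. The direction (2)$\Rightarrow$(1) is immediate, and (1)$\Rightarrow$(2) goes through the canonical collapse $h=h_c$ and the downward simulation of Lemma~\ref{lem:min-init-sim}.

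\emph{Direction (2)$\Rightarrow$(1).} A run of $S(m)$ that reaches a bad state $s$ has $K_I^s\vdash_{DY} a$ for some $a\in Secret$. By the definition of leakiness in Section~\ref{sec:leakage}, with $K_I^s=\know(s)$, this run is already a leaky run of the system $S(m)$. So (1) holds with $n:=m$.

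\emph{Direction (1)$\Rightarrow$(2).} The steps, in order, are as follows.
\begin{enumerate}
\item Fix a leaky run $s_0\xrightarrow{\pi}s$ in some $S(n)$, so $K_I^s\vdash_{DY}a$ with $a\in Secret$. Using Lemma~\ref{lem:normalisation} and Proposition~\ref{prop-bound}, truncate $\pi$ to a finite prefix whose last state is (stably) bad. This is not strictly needed for the cut-off, but it guarantees that the run we collapse is finite, as Lemma~\ref{lem:min-init-sim} requires.
\item Apply Lemma~\ref{lem:min-init-sim} to $s_0$ and this prefix. This gives the minimal initial state $m_0=\hat h(s_0)$ and a run $m_0\xrightarrow{\tilde h(\pi)}\hat h(s)$.
\item Check that $m_0$ lives in a system of size $m\le c$ componentwise. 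By Definition~\ref{cdf}, $h_c$ maps the agents of role $X$ onto $\{1_X,\dots,\min(n_X,k_X)_X\}$. So the collapsed system has $m_X=\min(n_X,k_X)\le k_X$ agents of role $X$.
\item Transport badness downward. By Lemma~\ref{lem:forward-dy}, $K_I^s\vdash_{DY}a$ gives $\hat h(K_I^s)\vdash_{DY}\hat h(a)$, and $\hat h(K_I^s)=K_I^{\hat h(s)}$ since the collapsed state is literally $\hat h(s)$.
\item Conclude that $\hat h(s)$ is bad, once it is shown that $\hat h(a)\in Secret$ (see below).
\end{enumerate}

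The main obstacle is the last point. $\hat h$ replaces atoms pertaining to agent $i$ by those pertaining to $h(i)$, so $\hat h(a)$ is the corresponding role-secret of a representative agent. I would argue that $\hat h$ maps $Secret_X$-atoms of agent $i_X$ to $Secret_X$-atoms of $h(i_X)$: the map is role-respecting by Definition~\ref{def1map}, and the $Secret_X$ are defined role-wise. Hence $\hat h(Secret)\subseteq Secret$.

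A subtler point is that, by the canonical-atoms paragraph (Section~\ref{par:canonical-atoms}), the agents in a canonical class already share their atoms. So $\hat h$ fixes $a$ itself, and the derivation is even verbatim. I would state this explicitly so the argument does not depend on subtle identifications.

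I would also double-check that the stuttering collapse in $\tilde h(\pi)$ still yields a legitimate run of $S(m)$. Lemma~\ref{lem:min-init-sim} asserts this, and I would cite it rather than re-prove it.
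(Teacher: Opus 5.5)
Your proof is correct and follows the paper's argument essentially step for step. (2)$\Rightarrow$(1) holds because a bad state is already a leak, and (1)$\Rightarrow$(2) goes via the downward simulation of Lemma~\ref{lem:min-init-sim}, forward preservation (Lemma~\ref{lem:forward-dy}), the role-respecting fact $\hat h(a)\in Secret$, and the size bound $m_X\le k_X$ from Definition~\ref{cdf}; the only difference is your step 1, which is superfluous, since the definition of a leaky run already supplies a state $s$ on the run with $a\in Secret\cap K_I^s$, so the finite prefix ending in a bad state is available directly, as the paper uses it, without appealing to Lemma~\ref{lem:normalisation} or Proposition~\ref{prop-bound}.
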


For the canonical map, a bad state is already stably bad, so the decision
procedures below may test membership in $Bad$ directly.

\begin{proof}
$(2 \Rightarrow 1)$ is immediate: a system of size $m\le c$ is an instance of the
family, and any bad state is a genuine leak (Definition~\ref{bad}).

$(1 \Rightarrow 2)$. Suppose $S(n)$ has a leaky run $s_0\xrightarrow{\pi}s$ with
$s$ bad, i.e.\ $K_I^{s}\vdash_{DY} a$ for some $a\in Secret$. Apply the downward
simulation of Lemma~\ref{lem:min-init-sim}: with $m_0=\hat h(s_0)$ a minimal
initial state, there is a run $m_0\xrightarrow{\widetilde h(\pi)}\hat h(s)$.

We claim $\hat h(s)$ is bad. Indeed $K_I^{\hat h(s)}=\hat h(K_I^{s})$ and, by
forward preservation (Lemma~\ref{lem:forward-dy}), $K_I^{s}\vdash_{DY} a$ implies
$\hat h(K_I^{s})\vdash_{DY}\hat h(a)$. As $h$ is role-respecting on
agent-indexed secrets, $\hat h(a)\in Secret$. Hence $\hat h(s)$ is bad.

Finally, $m_0$ retains at most $k_X$ agents per role $X$ (Definition~\ref{def:min-init}
and Definition~\ref{cdf}), so $\hat h(s)$ lives in a system of size $m\le c$
componentwise. This exhibits the required run.
\end{proof}

\medskip
\noindent
\emph{Remark on the direction of the cut-off.} The witness is transferred by
collapsing the large run \emph{down} to the minimal system, and badness is
preserved \emph{downward} by forward DY-derivability. This is the direction the
maps support; it does not use upward closure of the bad set. Upward closure
(Lemma~\ref{lem:bad-upward}) and upward compatibility
(Theorem~\ref{thm:upward-compat}) are what the alternative WSTS-coverability
procedure (Algorithm~2) rests on instead.

\subsection{Algorithm 1: Finite Cutoff Exploration}

By Theorem~\ref{thm:cutoff}, it suffices to search, for each of the finitely many
size vectors $m\le c$, whether the fixed-size system $S(m)$ reaches a bad state.
For a minimal initial state $s_0$, the \emph{finite reachability tree}
$\mathsf{FRT}(s_0)$~\cite{FinkelSchnoebelen01} explores all successors while
quotienting modulo $\preceq$ (a new state is not expanded if it is already
$\preceq$-covered). By the wqo property $\mathsf{FRT}(s_0)$ is finite, and by
Proposition~\ref{prop-bound} any leak appears within a computable depth
$f(|Pr|,\#\text{bound})$. Algorithm~\ref{alg:cutoff} explores all such trees up to
the cutoff.

\begin{algorithm}[H]
\caption{Finite cutoff exploration} \label{alg:cutoff}
\begin{algorithmic}[1]
\Require Protocol $Pr$, bounds $(k_X)$, BOS order $\preceq$
\Ensure secure if no leaky run exists, else insecure
\State Compute $c = (k_X)_{X\in Roles}$
\ForAll{role-size vectors $m \leq c$}
  \ForAll{minimal initial states $s_0$ of size $m$}
    \State Construct $FRT(s_0)$ under $\preceq$-subsumption
    \If{some node $s$ satisfies $s \in Bad$}
        \State \Return insecure
    \EndIf
  \EndFor
\EndFor
\State \Return secure
\end{algorithmic}
\end{algorithm}

\noindent\emph{Correctness.}
If Algorithm~\ref{alg:cutoff} returns \emph{insecure}, a bad state has been
reached, hence a genuine secrecy violation exists. If it returns \emph{secure},
no bad state is reachable in any system of size $\le c$; by
Theorem~\ref{thm:cutoff}, no leaky run exists at all.

\subsection{Algorithm 2: WSTS Backward Coverability}

Algorithm~2 avoids an explicit cutoff.  Instead,
having shown $(S,\rightarrow,\preceq)$ to be a WSTS with upward-closed target
$Bad^\star$ (Theorem~\ref{th:wsts}), one runs backward coverability from a finite
basis of $Bad^\star$.

\begin{algorithm}[H]
\caption{WSTS backward coverability}
\begin{algorithmic}[1]
\Require Protocol $Pr$, BOS order $\preceq$
\Ensure secure if initial states are disjoint from backward closure
\State Construct a finite basis $B$ of $Bad^\star$
\State Initialise worklist $W := B$
\While{$W \neq \emptyset$}
  \State remove some $s$ from $W$
  \ForAll{predecessors $p$ of $s$}
    \If{$p$ is not $\preceq$-covered}
      \State add $p$ to $W$
    \EndIf
  \EndFor
  \If{$p$ is a minimal initial state (up to $\preceq$)}
    \State \Return insecure
  \EndIf
\EndWhile
\State \Return secure
\end{algorithmic}
\end{algorithm}

\noindent\emph{Correctness.}
Termination follows from standard WSTS arguments: $\preceq$ is a well-quasi-order
and predecessor generation is effective. If the algorithm returns
\emph{insecure}, a minimal initial state reaches $Bad^\star$, hence a genuine
secrecy violation exists. If a secrecy violation exists, then by
Lemma~\ref{lem:normalisation} there is a state in $Bad^\star$ reachable from some
initial state; since $Bad^\star$ is upward-closed and $(S,\to,\preceq)$ is a WSTS,
backward coverability discovers such a predecessor.

\subsection{Complexity} 

\paragraph{\textbf{Parameters}}
Let $|Pr|$ denote the size of the protocol description (total number of role steps
and atomic symbols in role specifications). Let $(k_X)_{X \in Roles}$ be the
freshness bounds, and write $K := \sum_{X \in Roles} k_X$. Let $c = (k_X)_{X \in Roles}$
be the BOS cutoff. We write $A$ for the set of atoms that may occur after
terminality; by Lemma~\ref{reachfrommin}, $|A| = \mathcal{O}(K + |Pr|)$.

\subsubsection{\textbf{Algorithm 1}}
For a fixed size $m \leq c$, the number of agents is bounded by $K$, each agent has
finitely many control locations (bounded by $|Pr|$), and intruder knowledge
consists of terms over the finite atom set $A$. Modulo $\preceq$-subsumption, the
number of distinct states in the reachability tree is bounded by
$2^{poly(|A|)} = 2^{poly(K + |Pr|)}$, and transition branching is polynomial in $K$
and $|Pr|$. Hence Algorithm~1 runs in time $2^{poly(K + |Pr|)}$. If the bounds
$k_X$ are themselves exponential in $|Pr|$ (as assumed in Section~\ref{sec:wqo}),
the overall complexity is at most $2^{2^{poly(|Pr|)}}$, i.e., double-exponential.

\subsubsection{\textbf{Algorithm 2}}
Backward coverability in WSTS is, in general, nonelementary in the size of the
system description. In our setting, derivation height is bounded by the normal-form
function $g(|A|)$ (Section~\ref{app:dy}) and $|A| = \mathcal{O}(K + |Pr|)$, so
there is a computable $F$, growing faster than any fixed tower of exponentials,
with Algorithm~2 running in time $F(K + |Pr|)$.

%% file: scope-limitations.tex

We now discuss the limits and reach of our results.

\paragraph{\textbf{Primitives and the collapse}}
Our protocol-class provides pairing, hashing, and symmetric and public-key encryption
with atomic keys. The forward-preservation argument
(Lemma~\ref{lem:forward-dy}) is a case analysis over exactly these
constructors, so a further \emph{atomic-keyed} primitive can be added by
extending that analysis with its synthesis and analysis rules. What does
\emph{not} lift for free is a primitive with an equational theory, or
\emph{compound keys}, that is, keys built from several terms (as with a key
derived by a key-derivation function from a secret and context labels). Compound
keys break two assumptions at once: the induced term map is defined by replacing
agent-indexed \emph{atoms}, and the analysis rules for decryption test for an
atomic key in the knowledge. Lifting the result to compound keys or to a general
equational theory is therefore not a matter of adding cases; it requires a
collapse that is stable under the equational theory, which we leave open and
discuss as the main avenue for extension.

\paragraph{\textbf{The well-typed attacker, and how it is realised}}
We restrict the Dolev--Yao attacker to well-typed substitutions. This excludes
type-flaw attacks and is, with bounded freshness, what makes the collapse sound.
It is a clear restriction from classical  Dolev--Yao. It is, however, an
\emph{implementable in practices} via a
standard typing discipline, in which every message field carries an authenticated  header for typing, forcing all attacker  to respect types. 

\paragraph{\textbf{Properties, roles, and state}}
Our target is secrecy, a reachability property. Trace-based properties such as
agreement are natural next targets (they too are monotone under collapse), but
 privacy-style equivalences are not reachability
properties and would need a different ordering. We assume each agent instantiates
a single role with indices unused in message content; an agent playing several
roles can be encoded by a product role, enlarging the per-role bounds. Stateful protocols, with a mutable global store persisting across sessions, fall
outside the model, since the collapse is defined on per-agent local states and
the intruder knowledge, not on shared mutable state. Finally, bounded freshness is
an assumption on the model: where material is drawn from a bounded, rotated, or
cached pool it is met and the cut-off applies. 
Instead, the classical undecidability of the
unbounded-freshness, unbounded-session regime (see Figure~\ref{fig:landscape})
remains.

%% file: concl.tex
This paper addresses secrecy verification for cryptographic protocols in
the Dolev--Yao model from a genuinely parameterised perspective. Unlike
classical approaches, which obtain decidability by fixing the number of
sessions or agents, we consider systems with an unbounded number of role
instantiations and ask whether secrecy can be decided uniformly for all
system sizes. Our main result shows that parameterised secrecy is decidable under
bounded freshness per role, and well-typed substitutions. Technically, protocol executions are modelled
as infinite-state transition systems equipped with a bound-based state
ordering, which we show yields a cut-off and, more strongly, a well-structured transition system.

Methodologically, the paper shows that parameterised verification can be
combined with symbolic protocol analysis without weakening the standard
Dolev--Yao intruder model. The agent-collapsing maps and induced term
abstractions isolate why intruder derivability is preserved under parameter
reduction, and may be of independent interest.

Conceptually, the result clarifies the role of freshness in secrecy
verification. While classical work established that bounding the number
of nonces suffices for decidability, we show that
bounded freshness also controls the parameterised dimension: even with
unboundedly many participants, secrecy violations admit finite cut-offs
and are always witnessed in systems of bounded size.


 Several directions for future work remain open. Extending the framework
 beyond secrecy to properties such as authentication and privacy
 assertions is a natural next step. It would also be interesting to
 relax freshness bounds or replace them with alternative notions of
 boundedness, such as bounded message length. Finally, an important
 practical challenge is to exploit the cut-off results established here
 in the design of automated verification tools for unbounded-session
 protocols.